\newtheorem{theorem}{Theorem}[section]
\newtheorem{lemma}[theorem]{Lemma}
\newtheorem{fact}[theorem]{Fact}
\newcommand{\zo}{\{0,1\}}
\newcommand{\N}{{\mathbb{N}}}
\newcommand{\Z}{{\mathbb{Z}}}
\newcommand{\R}{{\mathbb{R}}}
\newcommand{\eps}{\epsilon}
\newcommand{\tO}{\tilde{O}}
\newcommand{\E}[2]{{\mathbb{E}_{#1}\left[#2\right]}}
\DeclareMathOperator{\ed}{ed}
\newcommand{\M}{{\cal M}}
\DeclareMathOperator{\TEMD}{TEMD}
\DeclareMathOperator{\TM}{TM}
\newcommand{\prelim}{\footnote{A preliminary version of this paper appeared in {\sl Proceedings of the 41st Annual ACM Symposium on Theory of Computing (STOC 2009)}, Bethesda, MD, USA, 2009, pp.\ 199--204.}}
\title{Approximating Edit Distance in Near-Linear Time\protect\prelim}
\author{
Alexandr Andoni\thanks{This work was done when the author was at
Massachusetts Institute of Technology, while supported in
  part by David and Lucille Packard Fellowship and by
MADALGO (Center for Massive Data Algorithmics, funded by the Danish
National Research Association) and by NSF grant CCF-0728645.
}\\
Microsoft Research SVC
\and
Krzysztof Onak\thanks{Supported in part by a Symantec research fellowship, NSF grant 0728645, and NSF grant 0732334. This work was done when the author was a graduate student at Massachusetts Institute of Technology.}\\
Carnegie Mellon University
}
\begin{document}

\maketitle

\begin{abstract}
We show how to compute the edit distance between two strings of length
$n$ up to a factor of $2^{\tO(\sqrt{\log n})}$ in $n^{1+o(1)}$
time. This is the first sub-polynomial approximation algorithm for this problem
that runs in near-linear time, improving on the state-of-the-art
$n^{1/3+o(1)}$ approximation. Previously, approximation of
$2^{\tO(\sqrt{\log n})}$ was known only for {\em embedding} edit
distance into $\ell_1$, and it is not known if that
embedding can be computed in less than quadratic time.
\end{abstract}

\section{Introduction}
The {\em edit distance} (or {\em Levenshtein distance}) between two
strings is the number of insertions, deletions, and substitutions
needed to transform one string into the other~\cite{Lev65}. This
distance is of fundamental importance in several fields such as
computational biology and text processing/searching, and consequently,
problems involving edit distance were studied extensively
(see~\cite{Navarro01}, \cite{Gus-book}, and references therein).  In
computational biology, for instance, edit distance and its slight
variants are the most elementary measures of dissimilarity for
genomic data, and thus improvements on edit distance algorithms have
the potential of major impact.

The basic problem is to compute the edit distance between two strings
of length $n$ over some alphabet. The text-book dynamic programming
runs in $O(n^2)$ time (see~\cite{CLRS} and references therein). This
was only slightly improved by Masek and Paterson~\cite{MP80} to
$O(n^2/\log^2 n)$ time for constant-size alphabets\footnote{The result
  has been only recently extended to arbitrarily large alphabets by
  Bille and Farach-Colton~\cite{BFC08} with a $O(\log\log n)^2$
  factor loss in time.}. Their result from 1980 remains the best
algorithm to this date.

Since near-quadratic time is too costly when working on large
datasets, practitioners tend to rely on faster heuristics
(see~\cite{Gus-book}, \cite{Navarro01}). This leads to the
question of finding fast algorithms with provable guarantees,
specifically: can one {\em approximate} the edit distance between two
strings in near-linear time \cite{I-survey, BEK+03, BJKK04, BES06,
  CPSV, C-PhD, OR-edit, KN, KR06}\,?

\newcommand{\localfootnote}{\footnote{We make no
    attempt at presenting a complete list of results for restricted
    problems, such as average case edit distance, weakly-repetitive
    strings, bounded distance regime, or related problems, such as
    pattern matching/nearest neighbor, sketching. However, for a very
    thorough survey, if only slightly outdated,
    see~\cite{Navarro01}.}}
\paragraph{Prior results on approximate algorithms\protect\localfootnote.}
A linear-time $\sqrt{n}$-approximation algorithm
immediately follows from the $O(n+d^2)$-time exact algorithm
(see Landau, Myers, and Schmidt \cite{LMS98}), where $d$ is the edit
distance between the input strings. Subsequent research improved the
approximation first to $n^{3/7}$, and then to $n^{1/3+o(1)}$,
due to, respectively, Bar-Yossef, Jayram, Krauthgamer, and
Kumar~\cite{BJKK04}, and Batu, Erg\"{u}n, and Sahinalp~\cite{BES06}.

A {\em sublinear} time algorithm was obtained by Batu, Erg\"{u}n,
Kilian, Magen, Raskhodnikova, Rubinfeld, and Sami~\cite{BEK+03}. Their
algorithm distinguishes the cases when the distance is $O(n^{1-\eps})$
vs.\ $\Omega(n)$ in $\tO(n^{1-2\eps}+n^{(1-\eps)/{2}})$ time\footnote{We
  use $\tO(f(n))$ to denote $f(n)\cdot \log^{O(1)} f(n)$.}
for any $\eps>0$.
Note that their algorithm cannot distinguish distances, say, $O(n^{0.1})$
vs.\ $\Omega(n^{0.9})$.

On a related front, in 2005, the breakthrough result of Ostrovsky and
Rabani gave an {\em embedding} of the edit distance metric into
$\ell_1$ with $2^{\tO(\sqrt{\log n})}$ distortion~\cite{OR-edit} (see
preliminaries for definitions). This result vastly improved related
applications, namely nearest neighbor search and sketching. However,
it did not have implications for computing edit distance between two
strings in sub-quadratic time. In particular, to the best of our
knowledge it is not known whether it is possible to compute their
embedding in less than quadratic time.

The best approximation to this date remains the 2006 result of Batu,
Erg\"{u}n, and Sahinalp~\cite{BES06}, achieving $n^{1/3+o(1)}$
approximation. Even for $n^{2-\eps}$ time, their approximation is
$n^{\eps/3+o(1)}$.

\paragraph{Our result.} We obtain $2^{\tO(\sqrt{\log n})}$ approximation in
near-linear time. This is the first sub-polynomial approximation
algorithm for computing the edit distance between two strings running
in strongly subquadratic time.

\begin{theorem}
\label{thm:main}
The edit distance between two strings $x,y\in\zo^n$ can be computed up
to a factor of $2^{O(\sqrt{\log n\log\log n})}$ in $n\cdot
2^{O(\sqrt{\log n\log\log n})}$ time.
\end{theorem}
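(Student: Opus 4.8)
The plan is to split the proof into a \emph{combinatorial} part and an \emph{algorithmic} part. The combinatorial part is a recursive, hierarchical characterization of edit distance in the spirit of the Ostrovsky--Rabani embedding~\cite{OR-edit}: a quantity $\mathcal{E}(x,y)$, defined by a bottom-up recursion over substrings at geometrically increasing length scales, that satisfies $\mathrm{ed}(x,y)\le\mathcal{E}(x,y)\le 2^{O(\sqrt{\log n\log\log n})}\cdot\mathrm{ed}(x,y)$. The algorithmic part then shows that $\mathcal{E}(x,y)$ can be evaluated, to within a further $1+o(1)$ factor, in $n\cdot 2^{O(\sqrt{\log n\log\log n})}$ time --- the point being that, unlike the Ostrovsky--Rabani embedding itself, this recursion can be carried out without ever materializing the exponentially many substrings it implicitly refers to. It is convenient to first reduce, by binary search with an $O(\log n)$-factor overhead, to the promise problem of distinguishing $\mathrm{ed}(x,y)\le R$ from $\mathrm{ed}(x,y)>\alpha R$ with $\alpha=2^{O(\sqrt{\log n\log\log n})}$, so that the relevant distance scale is known throughout.

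Fix a branching parameter $b$ and set $L=\log_b n$, $\ell_i=b^i$. I would define $\mathcal{E}_i(u,v)$ for pairs of length-$\ell_i$ strings by: $\mathcal{E}_0$ tests equality of single bits, and $\mathcal{E}_i(u,v)$ is, up to normalization, the minimum over a small geometric net of shifts $s$ of the Earth Mover Distance between the distribution of the $b$ length-$\ell_{i-1}$ sub-blocks of $u$ and the distribution of length-$\ell_{i-1}$ windows of $v$, under a ground metric that combines $\mathcal{E}_{i-1}$ with a penalty for how far a window is displaced from its ``intended'' position, plus a charge proportional to $|s|$. Setting $\mathcal{E}=\mathcal{E}_L$, the sandwich bound above is essentially the Ostrovsky--Rabani analysis, but I would redo it tracking constants: each level contributes a bounded (constant, or at worst polylogarithmic) distortion from the block-matching-plus-shift amortization, and these multiply over the $L$ levels. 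Choosing $b=2^{\Theta(\sqrt{\log n\log\log n})}$, so that $L=\Theta(\sqrt{\log n/\log\log n})$, balances the per-level loss against the level count and yields the claimed $2^{O(\sqrt{\log n\log\log n})}$ form.

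For the algorithmic part I would represent the level-$i$ ``embedding'' of each relevant substring $u$ by a vector $v_u\in\ell_1$ of dimension and sparsity $n^{o(1)}$, built bottom-up, so that $\|v_u-v_{u'}\|_1$ approximates $\mathcal{E}_i(u,u')$. Given the vectors $\{v_{u'}\}$ of the sub-blocks and windows feeding into $\mathcal{E}_i(u,\cdot)$, one forms their point cloud in $\ell_1$ and computes an Indyk--Thaper-style EMD-into-$\ell_1$ sketch of it --- a randomly shifted hierarchical grid at $O(\log\ell_i)$ scales, recording cell masses, with cell identifiers hashed down so $v_u$ stays sparse; this converts each EMD into an $\ell_1$ distance at the cost of an extra $O(\log\ell_i)$ factor per level, absorbed into the bound above by re-balancing $b$. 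Two further reductions keep the work near-linear: (i) one only ever handles the $O(n/\ell_i)$ blocks of $x$ and the $O(n/\ell_i)$ window positions of $y$ per level, each crossed with the $n^{o(1)}$-size shift net, for $n^{1+o(1)}$ vectors in all, each $n^{o(1)}$-sparse; and (ii) each sub-block/window distribution is randomly subsampled to $\mathrm{poly}(b,\log n)$ support before the grid sketch, which is valid because, after discarding a controlled amount of mass on rare substrings, EMD is robust to such subsampling. Processing levels $i=0,\dots,L$ and outputting $\|v_x-v_y\|_1$ at the top then gives the stated time and approximation.

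The main obstacle is the algorithmic part, specifically making this ``sketch of the EMD of a cloud of sketches'' compose correctly: one must control the interaction of the per-level hashing/grid randomness with the accumulated distortion, prevent the sketch dimension from growing across the $L$ levels, and --- most delicately --- fold the shift/displacement mechanism (which Ostrovsky--Rabani need in order to relate block matchings to genuine edit alignments) into the recursion using only a small net of shift values, so that the branching of the recursion does not blow up. A secondary subtlety is the subsampling robustness of EMD: numerous low-probability substrings can in principle carry significant transportation cost, so the truncation that makes subsampling safe must be argued against the recursive structure rather than for a single EMD instance. The remaining pieces --- the binary search over $R$, the standard Indyk--Thaper analysis, and the bookkeeping of the $n^{o(1)}$ factors --- are routine.
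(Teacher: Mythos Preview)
Your high-level plan --- an Ostrovsky--Rabani-style recursion with $b=2^{\Theta(\sqrt{\log n\log\log n})}$, plus a per-level embedding of an EMD-type distance into low-dimensional $\ell_1$ --- is the right shape, and matches the paper. But the algorithmic part, which you correctly flag as the main obstacle, has a genuine gap precisely where the paper's technical contribution lies.

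The problem is dimension control across levels. An Indyk--Thaper grid sketch of a point cloud in $\ell_1^d$ lives in $\ell_1$ of dimension exponential in $d$; ``hashing cell identifiers down so $v_u$ stays sparse'' controls storage, not the embedding. If distinct cells collide under the hash, $\|v_u-v_{u'}\|_1$ is corrupted in an uncontrolled way, so you no longer have an $\ell_1$ metric to feed into the next level's grid. The paper faces exactly this issue and solves it by a nontrivial \emph{non-oblivious} chain: first Cauchy projections reduce dimension but only give a one-sided guarantee, so one lands in a \emph{min-product} of $\ell_1$'s (a semimetric, not a metric); then this min-product is embedded into a min-product of tree metrics, then into the shortest-path metric of a sparse graph (using that the semimetric is $\gamma$-near an actual metric, namely TEMD), and finally Bourgain's embedding --- run in near-linear time on the sparse graph --- returns a genuine $O(\log^2 n)$-dimensional $\ell_1$ representation that can be fed to the next level. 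Nothing in your proposal substitutes for this re-metrization step.

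A second, related gap is how you avoid quadratic work per level. You propose random subsampling of the sub-block/window distributions before sketching, and you already note this is delicate for EMD. The paper sidesteps subsampling entirely: because the grid-plus-Cauchy embedding $A\mapsto P\cdot\psi(A)$ is \emph{linear} in the multiset $A$, and the sets $A_i=\{v_i,\dots,v_{i+s-1}\}$ are sliding windows, one computes $q_{i+1}$ from $q_i$ with $O(1)$ vector updates, processing all $n$ windows in $\tO(n\alpha)$ time. This also means the paper keeps vectors for \emph{all} $n-m+1$ length-$m$ substrings at each scale $m$ (necessary because the next level's windows are arbitrary substrings), whereas your ``only $O(n/\ell_i)$ blocks of $x$ and $O(n/\ell_i)$ window positions of $y$'' budget, together with your asymmetric blocks-vs-windows-with-shifts definition of $\mathcal{E}_i$, is a different combinatorial characterization than the symmetric $d_n^*$ of \cite{OR-edit} used here, and would need its own correctness proof rather than ``essentially the Ostrovsky--Rabani analysis.''
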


Our result immediately extends to two more related applications.  The
first application is to sublinear-time algorithms. In this scenario,
the goal is to compute the distance between two strings $x,y$ of the
same length $n$ in $o(n)$ time. For this problem, for any
$\alpha<\beta \le 1$, we can distinguish distance $O(n^\alpha)$ from
distance $\Omega(n^\beta)$ in $O(n^{\alpha+2(1-\beta)+o(1)})$ time.

The second application is to the problem of pattern matching with
errors. In this application, one is given a text $T$ of length $N$ and
a pattern $P$ of length $n$, and the goal is to report the substring
of $T$ that minimizes the edit distance to $P$. Our result immediately
gives an algorithm for this problem running in $O(N \log N)\cdot
2^{\tilde O(\sqrt{\log n})}$ time with $2^{\tilde O(\sqrt{\log n})}$
approximation. We note that the best {\em exact} algorithm for this
problem runs in time $O(Nn/\log^2 n)$ \cite{MP80}. Better algorithms
may be obtained if we restrict the minimal distance between the
pattern and best substring of $T$ or for relatives of the edit
distance. In particular, Sahinalp and Vishkin~\cite{SV96} and Cole and
Hariharan~\cite{CH-approx} showed linear-time algorithms for finding
all substrings at distance at most $n^{c}$, where $c$ is a constant in
$(0,1)$. Moreover, Cormode and Muthukrishnan gave a near-linear time
$\tO(\log n)$-approximation algorithm when the distance is the
\emph{edit distance with moves}.

\subsection{Preliminaries and Notation}

Before describing our general approach and the techniques used, we
first introduce a few definitions.

We write $\ed(x,y)$ to denote the edit distance between strings $x$ and $y$.
We use the notation $[n]=\{1,2,3,\ldots n\}$. For a string $x$, a
substring starting at $i$, of length $m$, is denoted
$x[i:i+m-1]$. Whenever we say \emph{with high probability} (w.h.p.)
throughout the paper, we mean ``with probability $1-1/p(n)$'', where
$p(n)$ is a sufficiently large polynomial function of the input size $n$.

\paragraph{Embeddings.} For a metric $(M,d_M)$, and another metric
$(X,\rho)$, an {\em embedding} is a map $\phi:M\to X$ such that, for
all $x,y\in M$, we have $ d_M(x,y)\le\rho(\phi(x),\phi(y))\le
\gamma\cdot d_M(x,y) $ where $\gamma\ge 1$ is the {\em distortion} of
the embedding. In particular, all embeddings in this paper are
non-contracting.

We say embedding $\phi$ is {\em oblivious} if for any
subset $S\subset M$ of size $n$, the distortion guarantee holds for all pairs
$x,y\in S$ with high probability. The embedding $\phi$ is {\em
  non-oblivious} if it holds for a specific set $S$ (i.e., $\phi$ is
allowed to depend on $S$).

\paragraph{Metrics.} The $k$-dimensional $\ell_1$ metric is the set of points living in
$\R^k$ under the distance $\|x-y\|_1=\sum_{i=1}^k |x_i-y_i|$. We also
denote it by $\ell_1^k$.

We define {\em thresholded Earth-Mover Distance}, denoted $\TEMD_t$
for a fixed threshold $t>0$, as the following distance on subsets $A$
and $B$ of size $s\in \N$ of some metric $(M,d_M)$:
\begin{equation}
\label{eqn:temd}
\TEMD_t(A,B)=\tfrac{1}{s}\min_{\tau:A\to B}\sum_{a\in A}
\min\big\{d_M(a,\tau(a)), t\big\}
\end{equation}
where $\tau$ ranges over all bijections between sets $A$ and
$B$. $\TEMD_{\infty}$ is the simple Earth-Mover Distance (EMD).  We
will always use $t=s$ and thus drop the subscript $t$; i.e.,
$\TEMD=\TEMD_s$.

A {\em graph (tree) metric} is a metric induced by a connected
weighted graph (tree) $G$, where the distance between two vertices is
the length of the shortest path between them. We denote an
arbitrary tree metric by $\TM$.

\paragraph{Semimetric spaces.} We define a semimetric
to be a pair $(M,d_M)$ that satisfies all the properties of a metric space
except the triangle inequality. A {\em $\gamma$-near metric} is a
semimetric $(M,d_M)$ such that there exists some metric $(M,d_M^*)$
(satisfying the triangle inequality) with the property that, for any
$x,y\in M$, we have that
$
d_M^*(x,y)\le d_M(x,y) \le \gamma\cdot d_M^*(x,y).
$

\paragraph{Product spaces.} A {\em sum-product over a metric $\M=(M,d_M)$}, denoted
$\bigoplus_{\ell_1}^k \M$, is a derived metric over the set $M^k$,
where the distance between two points $x=(x_1,\ldots x_k)$ and
$y=(y_1,\ldots y_k)$ is equal to
$$
d_{1,M}(x,y)=\sum_{i\in[k]} d_M(x_i,y_i).
$$

For example the space $\bigoplus_{\ell_1}^k \R$ is just the
$k$-dimensional $\ell_1$.

Analogously, a {\em min-product over $\M=(M,d_M)$}, denoted
$\bigoplus_{\min}^k \M$, is a semimetric over $M^k$, where
the distance between two points $x=(x_1,\ldots x_k)$ and
$y=(y_1,\ldots y_k)$ is
$$
d_{\min,M}(x,y)=\min_{i\in[k]}\big\{d_M(x_i,y_i)\big\}.
$$

We also slightly abuse the notation by writing $\bigoplus_{\min}^k
\TM$ to denote the min-product of $k$ tree metrics (that could differ
from each other).

\subsection{Techniques}
Our starting point is the Ostrovsky-Rabani
embedding~\cite{OR-edit}. For strings $x,y$, as well as for all
substrings $\sigma$ of specific lengths, we compute some vectors
$v_\sigma$ living in low-dimensional $\ell_1$ such that the distance
between two such vectors approximates the edit distance between the
associated (sub-)strings. In this respect, these vectors can be seen
as an embedding of the considered strings into $\ell_1$ of {\em
  polylogarithmic dimension}. Unlike the Ostrovsky-Rabani embedding,
however, our embedding is {\em non-oblivious} in the sense that the
vectors $v_\sigma$ are computed given all the relevant strings
$\sigma$. In contrast, Ostrovsky and Rabani give an {\em oblivious}
embedding $\phi_n:\zo^n\to \ell_1$ such that
$\|\phi_n(x)-\phi_n(y)\|_1$ approximates $\ed(x,y)$. However, the
obliviousness comes at a high price: their embedding requires a high
dimension, of order $\Omega(n)$, and a high computation time, of order
$\Omega(n^2)$ (even when allowing randomized embedding, and a constant
probability of a correctness). We further note that reducing the
dimension of this embedding seems unlikely as suggested by the results
on impossibility of dimensionality reduction within
$\ell_1$~\cite{CS-linearL1, BC, LN}. Nevertheless, the general
recursive approach of the Ostrovsky-Rabani embedding is the starting
point of the algorithm from this paper.

The heart of our algorithm is a near-linear time algorithm that, given
a sequence of low-dimensional vectors $v_1,\ldots v_n\in \ell_1$ and
an integer $s<n$, constructs new vectors $q_1,\ldots q_m\in
\ell_1^{O(\log^2 n)}$, where $m=n-s+1$, with the following
property. For all $i,j\in[m]$, the value $\|q_i-q_j\|_1$ approximates
the Earth-Mover Distance (EMD)\footnote{In fact, our algorithm does
  this for thresholded EMD, TEMD, but the technique is precisely the
  same.}  between the sets $A_i=\{v_i,v_{i+1},\ldots v_{i+s-1}\}$ and
$A_j=\{v_j,v_{j+1},\ldots v_{j+s-1}\}$. To accomplish this
(non-oblivious) embedding, we proceed in two stages. First, we embed
(obliviously) the EMD metric into a {\em min-product of $\ell_1$'s}
of low dimension. In other words, for a set $A$, we associate a matrix
$L(A)$, of polylogarithmic size, such that the EMD distance between
sets $A$ and $B$ is approximated by $\min_r\sum_t
|L(A)_{rt}-L(B)_{rt}|$. Min-products help us simultaneously on two
fronts: one is that we can apply a {\em weak} dimensionality reduction
in $\ell_1$, using the Cauchy projections, and the second one enables
us to accomplish a low-dimensional EMD embedding itself. Our embedding
$L(\cdot)$ is not only low-dimensional, but it is also {\em linear},
allowing us to compute matrices $L(A_i)$ in near-linear time by
performing one pass over the sequence $v_1,\ldots v_n$. Linearity is
crucial here as even the total size of $A_i$'s is $\sum_i
|A_i|=(n-s+1)\cdot s$, which can be as high as $\Omega(n^2)$, and so
processing each $A_i$ separately is infeasible.

In the second stage, we show how to embed a set of $n$ points lying in
a low-dimensional min-product of $\ell_1$'s back into a
low-dimensional $\ell_1$ with only small distortion. We note that
this is not possible in general, with any bounded distortion, because such a
set of points does not even form a metric. We show that this is
possible when we assume that the semi-metric induced by the set of
points approximates some metric (in our case, the set of
points approximates the initial EMD metric). The embedding from this stage
starts by embedding a min-product of $\ell_1$'s into a low-dimensional
min-product of tree metrics. We further embed the latter into an
$n$-point metric supported by the shortest-path metric of a {\em sparse}
graph. Finally, we observe that we can implement Bourgain's embedding
on a sparse graph metric in {\em near-linear time}. These last two
steps make our embedding non-oblivious.

\subsection{Recent Work}

We note that the recent work \cite{AKO-edit} has shown that one can
approximate the edit distance between two strings up to a multiplicative factor of
$(\log n)^{O(1/\eps)}$ in $n^{1+\eps}$ time, for any desired $\eps>0$. Although the new
result obtains polylogarithmic approximation, the running time is slightly
higher than the algorithm presented here. For a comparable
approximation, obtained for $\eps=\sqrt{\log\log n / \log n}$, the
algorithm of \cite{AKO-edit} does not improve the running time (up to
constants hidden by the big O notation). We further remark that the techniques of
\cite{AKO-edit} are disjoint from the techniques presented here, and
are based on asymmetric sampling of one of the strings.

\section{Short Overview of the Ostrovsky-Rabani Embedding}
\label{sec:ORreview}

We now briefly describe the embedding of Ostrovsky and
Rabani~\cite{OR-edit}.
Some notions introduced here are used in our algorithm described in
the next section.

The embedding of Ostrovsky and Rabani is recursive.  For a fixed $n$, they
construct the embedding of edit distance over strings of length $n$
using the embedding of edit distance over strings of shorter lengths
$l\le n/2^{\sqrt{\log n\log\log n}}$. 
We denote their embedding of length-$n$ strings by
$\phi_n:\zo^n\to\ell_1$, and let $d_n^{\rm OR}$ be the resulting
distance: $d_n^{\rm OR}(x,y)=\|\phi_n(x)-\phi_n(y)\|_1$.  For two
strings $x,y\in\zo^n$, the embedding is such that $d_n^{\rm
  OR}=\|\phi_n(x)-\phi_n(y)\|_1$ approximates an ``idealized''
distance $d^*_n(x,y)$, which itself approximates the edit distance
between $x$ and $y$.

Before describing the ``idealized'' distance $d^*_n$, we introduce
some notation. Partition $x$ into $b=2^{\sqrt{\log n\log\log n}}$
blocks called $x^{(1)},\ldots x^{(b)}$ of length $l=n/b$. Next, fix
some $j\in[b]$ and $s\le l$. We consider the set of all substrings of
$x^{(j)}$ of length $l-s+1$, embed each one recursively via
$\phi_{l-s+1}$, and define $S_j^s(x)\subset \ell_1$ to be the set of
resulting vectors (note that $|S_j^s|=s$). Formally,
$$ S_j^s(x)=\big\{\phi_{l-s+1}(x[(j-1)l+z:(j-1)l+z+l-s])\mid z\in[s]
\big\}.$$ Taking $\phi_{l-s+1}$ as given (and thus also the sets
$S_j^s(x)$ for all $x$), define the new ``idealized'' distance
$d^*_n$ approximating the edit distance between strings $x,y\in\zo^n$
as
\begin{equation}
\label{eqn:idealDist}
d^*_n(x,y)=c\sum_{j=1}^{b} \sum_{\stackrel{f\in \N}{s=2^f\le l}}
\TEMD(S_j^s(x), S_j^s(y))
\end{equation}
where TEMD is the thresholded Earth-Mover Distance (defined in
Equation~\eqref{eqn:temd}), and $c$ is a sufficiently large normalization
constant ($c\ge 12$ suffices).  Using the terminology from the
preliminaries, the distance function $d^*_n$ can be viewed as the
distance function of the sum-product of TEMDs, i.e.,
$\bigoplus_{\ell_1}^{b}\bigoplus_{\ell_1}^{O(\log n)}\TEMD$, and the
embedding into this product space is attained by the natural identity
map (on sets $S_j^s$).

The key idea is that the distance $d_n^*(x,y)$ approximates edit
distance well, assuming that $\phi_{l-s+1}$ approximates edit distance
well, for all $s=2^f$ where $f\in \{1,2,\ldots \lfloor \log_2
l\rfloor\}$. Formally, Ostrovsky and Rabani show that:
\begin{fact}[\cite{OR-edit}]
\label{fct:ORrecursion}
Fix $n$ and $b<n$, and let $l=n/b$.  Let $D_{n/b}$ be an upper bound on
distortion of $\phi_{l-s+1}$ viewed as an embedding of edit distance
on strings $\{x[i:i+l-s], y[i:i+l-s]\mid i\in[n-l+s]\}$, for all
$s=2^f$ where $f\in \{1,2,\ldots \lfloor \log_2 l\rfloor\}$.  Then,
$$
\ed(x,y)\le d^*_n(x,y) \le \ed(x,y)\cdot \left(D_{n/b}+b\right)\cdot O(\log n).
$$
\end{fact}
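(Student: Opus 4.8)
The plan is to prove the two inequalities separately, treating $d^*_n$ as a sum over blocks $j$ and scales $s=2^f$ of the quantities $\TEMD(S_j^s(x),S_j^s(y))$, and relating each such quantity to the edit distance between the corresponding blocks $x^{(j)},y^{(j)}$.

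For the lower bound $\ed(x,y)\le d^*_n(x,y)$, I would first argue that $\TEMD(S_j^s(x),S_j^s(y))$ is, up to the distortion $D_{n/b}$ which only helps in this direction (recall embeddings here are non-contracting), at least the "shift-weighted'' edit distance between $x^{(j)}$ and $y^{(j)}$ at scale $s$: for the optimal alignment between $x^{(j)}$ and $y^{(j)}$, most length-$(l-s+1)$ windows of $x^{(j)}$ are matched to a window of $y^{(j)}$ at a small shift, and a window matched at shift $\delta$ contributes about $\min\{\delta,s\}$ (after truncation) plus the internal edits. Summing the truncated shifts $\min\{\delta_z,s\}$ over dyadic scales $s$ telescopes (this is the standard fact that $\sum_{f}\min\{\delta,2^f\}=\Theta(\delta\log(\cdot))$ when $\delta\le l$, and is $\Theta(\delta + l\log(\cdot))$ otherwise), and with the normalization constant $c$ this recovers at least $\ed(x^{(j)},y^{(j)})$; summing over $j$ and using subadditivity of edit distance across blocks, $\sum_j \ed(x^{(j)},y^{(j)})\ge \ed(x,y) - O(b)$... but in fact one wants exactly $\ge\ed(x,y)$, so here I would instead observe that the truncation to $t=s$ combined with the extra $\sum_j$ over the $b$ blocks and the $O(\log n)$ factor gives enough slack, or more cleanly invoke the standard OR argument that a global optimal alignment of $x,y$ restricts to (near-)optimal alignments of the blocks and the per-block/per-scale contributions lower-bound the number of edits used inside and at the boundaries of block $j$. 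The constant $c\ge 12$ is chosen precisely to absorb the telescoping constants so that the sum is a genuine overestimate of $\ed(x,y)$.

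For the upper bound, I would fix an optimal sequence of $\ed(x,y)$ edit operations and track, for each block $j$, how it is transformed; let $e_j$ be the number of operations touching block $j$ (counting boundary-crossing operations for the blocks they affect), so $\sum_j e_j \le O(\ed(x,y))$ after accounting for the $O(b)$ block boundaries — and note $b\le \ed(x,y)\cdot(\text{something})$ is not assumed, which is why the final bound carries the additive $b$ inside the parenthesis. For a single block, I would bound $\sum_{s=2^f\le l}\TEMD(S_j^s(x),S_j^s(y))$: align the $z$-th window of $x^{(j)}$ to the window of $y^{(j)}$ obtained by following the optimal alignment, which is displaced by at most $e_j$ positions and differs internally by at most $e_j$ edits, so each window pays at most $\min\{e_j,s\}$ (shift, truncated) plus $D_{n/b}\cdot e_j$ (internal, via the distortion bound on $\phi_{l-s+1}$); dividing by $s$ as in the definition of $\TEMD$ and summing the geometric-type series over the $O(\log l)$ scales yields $O(\log n)\cdot(e_j + D_{n/b}\cdot e_j) = O(\log n)\cdot e_j\cdot(D_{n/b}+1)$. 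Windows that straddle a block boundary, or the $O(b)$ "extra'' operations, are what force the $+b$ term. Summing over $j$ and multiplying by $c$ gives $d^*_n(x,y)\le \ed(x,y)\cdot(D_{n/b}+b)\cdot O(\log n)$.

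The main obstacle is the bookkeeping in the lower bound: one must show that the dyadic sum of truncated shifts, together with the internal recursive distances, does not merely upper-bound but actually lower-bounds $\ed(x,y)$, which requires carefully charging each of the $\ed(x,y)$ operations to some (block, scale, window) triple without double counting and without losing more than the built-in constant $c$ allows — this is exactly the delicate combinatorial core of the Ostrovsky--Rabani analysis. The upper bound, by contrast, is a comparatively routine "follow the optimal alignment and sum the geometric series'' argument, with the only subtlety being the correct treatment of block boundaries, which is where the additive $b$ enters.
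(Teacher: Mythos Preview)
The paper does not give a proof of Fact~\ref{fct:ORrecursion}; it is quoted from~\cite{OR-edit} and used as a black box, so there is no in-paper argument to compare your proposal against.

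On the substance: your upper-bound sketch contains a real gap in how the $+b$ arises. You set $e_j$ to be the number of optimal-script operations touching block $j$, claim $\sum_j e_j=O(\ed(x,y))$, and then bound the block-$j$ contribution by $O(\log n)\cdot e_j\cdot(D_{n/b}+1)$. But a single insertion at the front of $x$ already breaks this: then $e_j=0$ for every $j>1$, yet every $y^{(j)}$ is globally shifted by one relative to $x^{(j)}$, so the scale-$s$ $\TEMD$ for each block is $\Theta(1)$ and the total over blocks and scales is $\Theta(b\log l)$, not $O(\log n)$. The correct decomposition separates (i) the cumulative shift $\delta_j\le\ed(x,y)$ at block $j$, which contributes $\min\{\delta_j,s\}$ per scale and hence $O(b\cdot\ed(x,y)\cdot\log n)$ over all blocks---this is the $+b$---from (ii) the internal edits after realigning by $\delta_j$, which contribute $D_{n/b}\cdot e_j$ per scale and $O(D_{n/b}\cdot\ed(x,y)\cdot\log n)$ overall---this is the $D_{n/b}$. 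So the $+b$ is not a boundary artifact; it is each of the $b$ blocks separately paying for the same global shift. (Two minor points on the lower bound: $\sum_j\ed(x^{(j)},y^{(j)})\ge\ed(x,y)$ holds exactly by concatenating block-wise edit scripts, with no $-O(b)$; and the Ostrovsky--Rabani argument for $c\sum_s\TEMD\ge\ed(x^{(j)},y^{(j)})$ is constructive---from an arbitrary bijection at each scale one builds an actual edit script---rather than the charging-without-double-counting scheme you outline.)
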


To obtain a complete embedding, it remains to construct an embedding
approximating $d^*_n$ up to a small factor. In fact, if one manages to
approximate $d^*_n$ up to a poly-logarithmic factor, then the final
distortion comes out to be $2^{O(\sqrt{\log n\log\log n})}$. This
follows from the following recurrence on the distortion
factor $D_n$. Suppose $\phi_n$ is an embedding that approximates $d^*_n$ up
to a factor $\log^{O(1)} n$. Then, if $D_n$ is the distortion of
$\phi_n$ (as an embedding of edit distance), then
Fact~\ref{fct:ORrecursion} immediately implies that, for
$b=2^{\sqrt{\log n\log\log n}}$,
$$ D_n\le D_{n/2^{\sqrt{\log n\log\log n}}}\cdot \log^{O(1)}
n+2^{O(\sqrt{\log n\log\log n})}.
$$ This recurrence solves to $D_n\le 2^{O(\sqrt{\log n\log\log n})}$
as proven in~\cite{OR-edit}.

Concluding, to complete a step of the recursion, it is sufficient to
embed the metric given by $d^*_n$ into $\ell_1$ with a polylogarithmic
distortion. Recall that $d^*_n$ is the distance of the metric
$\bigoplus_{\ell_1}^{b}\bigoplus_{\ell_1}^{O(\log n)} \TEMD$, and thus,
one just needs to embed $\TEMD$ into $\ell_1$.  Indeed, Ostrovsky and
Rabani show how to embed a relaxed (but sufficient) version of TEMD
into $\ell_1$ with $O(\log n)$ distortion, yielding the desired
embedding $\phi_n$, which approximates $d_n^*$ up to a $O(\log n)$
factor at each level of recursion. We note that the required dimension
is $\tO(n)$.

\section{Proof of the Main Theorem}

We now describe our general approach. Fix $x\in\zo^n$. For
each substring $\sigma$ of $x$, we construct a low-dimensional
vector $v_\sigma$ such that, for any two substrings $\sigma,\tau$ of the
same length, the edit distance between $\sigma$ and $\tau$ is
approximated by the $\ell_1$ distance between the vectors $v_\sigma$
and $v_\tau$. We note that the embedding is non-oblivious: to
construct vectors $v_\sigma$ we need to know {\em all} the
substrings of $x$ in advance (akin to Bourgain's embedding
guarantee). We also note that computing such vectors is enough to
solve the problem of approximating the edit distance between two
strings, $x$ and $y$. Specifically, we apply this procedure to the
string $x'=x\circ y$, the concatenation of $x$ and $y$, and then
compute the $\ell_1$ distance between the vectors corresponding to $x$
and $y$, substrings of $x'$.

More precisely, for each length $m\in W$, for some set $W\subset [n]$
specified later, and for each substring $x[i:i+m-1]$, where
$i=1,\ldots n-m+1$, we compute a vector $v_i^{(m)}$ in $\ell_1^\alpha$, where
$\alpha=2^{\tO(\sqrt{\log n})}$. The construction is inductive: to compute vectors
$v_i^{(m)}$, we use vectors $v_i^{(l)}$ for $l\ll m$ and $l\in W$. The
general approach of our construction is based on the analysis of the
recursive step of Ostrovsky and Rabani, described in
Section~\ref{sec:ORreview}. In particular, our vectors
$v_i^{(m)}\in\ell_1$ will also approximate the $d^*_m$ distance (given
in Equation~\eqref{eqn:idealDist}) with sets $S_i^s$ defined using vectors
$v_i^{(l)}$ with $l\ll m$.  

The main challenge is to process one level (vectors $v_i^{(m)}$ for a
fixed $m$) in near-linear time. Besides the computation time itself, a
fundamental difficulty in applying the approach of Ostrovsky and
Rabani directly is that their embedding would give a much higher
dimension $\alpha$, proportional to $\tO(m)$.
Thus, if we were to use their embedding,
even storing all the vectors would take quadratic space.

To overcome this last difficulty, we settle on non-obliviously
embedding the set of substrings $x[i:i+m-1]$ for $i\in[n-m+1]$ under
the ``ideal'' distance $d^*_m$ with $\log^{O(1)} n$ distortion
(formally, under the distance $d^*_m$ from Equation~\eqref{eqn:idealDist},
when $S_j^s(x[i:i+m-1])=\left\{v_{i+(j-1)l+z-1}^{(l-s+1)}\mid
z\in[s]\right\}$ for $l=m/2^{\sqrt{\log n\log\log n}}$).
Existentially, we know that there exist vectors $w_i^{(m)}\in
\R^{O(\log^2 n)}$ such that $\|w_i^{(m)}-w_j^{(m)}\|_1$ approximates
$d^*_m(x[i:i+m-1],x[j:j+m-1])$ for all $i$ and $j$ --- this follows by the
standard Bourgain's embedding~\cite{Bou}. The vectors $v_i^{(m)}$ that we compute
approximate the properties of the ideal vectors $w_i^{(m)}$. Their efficient computability
comes at the cost of an additional polylogarithmic loss in approximation.

The main building block is the following theorem. It shows how to
approximate the TEMD distance for the desired sets $S_j^s$.

\medskip
\begin{theorem}
\label{thm:streamEMD}
Let $n\in \N$ and $s\in[n]$. Let $v_1,\ldots v_n$ be vectors in
$\{-M,\ldots M\}^\alpha$, where $M = n^{O(1)}$ and $\alpha\le
n$. Define sets $A_i=\{v_i,v_{i+1},\ldots v_{i+s-1}\}$ for $i \in
[n-s+1]$.

Let $t=O(\log^2 n)$. We can compute (randomized) vectors
$q_i\in\ell_1^t$ for $i\in[n-s+1]$ such that for any $i,j\in[n-s+1]$,
with high probability, we have
$$
\TEMD(A_i,A_j)\le \|q_i-q_j\|_1\le \TEMD(A_i,A_j)\cdot \log^{O(1)} n.
$$
Furthermore, computing all vectors $q_i$ takes $\tO(n\alpha)$ time.
\end{theorem}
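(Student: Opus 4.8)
The plan is to build the vectors $q_i$ in two stages, mirroring the two-stage strategy outlined in the Techniques section. In the first stage I embed the $\TEMD$ metric on subsets of size $s$ of $\{-M,\dots,M\}^\alpha$ into a min-product of low-dimensional $\ell_1$'s via a \emph{linear} map $L(\cdot)$, so that $L(A_i)$ can be maintained incrementally as $i$ advances (each step drops $v_{i-1}$ and adds $v_{i+s-1}$); this is what keeps the running time near-linear despite $\sum_i|A_i|$ being as large as $\Omega(n^2)$. In the second stage I collapse the resulting $n-s+1$ points in a low-dimensional min-product of $\ell_1$'s back down into $\ell_1^{O(\log^2 n)}$, using crucially the fact that these points approximate an actual metric ($\TEMD$ itself), so Bourgain-type embeddings apply.

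For the first stage, I would realize $\TEMD$ as follows. Impose a randomly-shifted hierarchical grid (quadtree) on $\{-M,\dots,M\}^\alpha$; the standard Charikar/Indyk--Thorup EMD-into-$\ell_1$ embedding maps a set $A$ to a vector indexed by grid cells, whose coordinate for cell $C$ is (a weighted count of) $|A\cap C|$, and this approximates $\TEMD$ up to $O(\log(\text{spread})) = O(\log n)$ distortion. This map is linear in the indicator vector of $A$, hence maintainable in a sliding window. The difficulty is that the ambient dimension of this $\ell_1$ image is huge (proportional to the number of grid cells, i.e.\ $\poly(nM)^\alpha$), so I then apply a \emph{weak} dimension reduction: $1$-stable (Cauchy) random projections down to $O(\log n)$ dimensions. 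A single Cauchy projection preserves $\ell_1$ distances only up to a constant factor with constant probability (not w.h.p., since Cauchy has no finite moments), but taking the coordinate-wise \emph{median} over $O(\log n)$ independent projections does not give an $\ell_1$ point. This is exactly where the min-product saves us: I take $r = \poly\log n$ independent blocks, each a fresh randomly-shifted grid composed with a fresh $O(\log n)$-dimensional Cauchy projection, and declare the distance to be $\min_r \|L(A_i)_r - L(A_j)_r\|_1$. Taking the min over enough repetitions simulates the median trick: w.h.p.\ at least one block underestimates by at most a constant and none overestimates wildly after an appropriate scaling/truncation, giving an $O(\log^{O(1)} n)$-distortion embedding of $\TEMD$ into $\bigoplus_{\min}^{\poly\log n} \ell_1^{O(\log n)}$, with the whole thing linear and computable in $\tO(n\alpha)$ time by one streaming pass.

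For the second stage I must turn a point set in $\bigoplus_{\min}^{k}\ell_1^{d}$ (with $k,d = \poly\log n$) that is a $\log^{O(1)}n$-near metric into an $\ell_1^{O(\log^2 n)}$ point set. First embed each $\ell_1^{d}$ coordinate-space into a tree metric (a randomly shifted quadtree / Bourgain-style HST has $O(d\log n)$ distortion for a fixed $n$-point set), obtaining a point set in $\bigoplus_{\min}^{k}\TM$. Next flatten $k$ tree metrics on $n$ points into a single \emph{sparse} graph: take the disjoint union of the $k$ trees, identify the $n$ leaves across trees, and observe that distances in this graph are dominated by the min over trees (one must check the min-semimetric is within a constant of the graph shortest-path metric, using the near-metric hypothesis to rule out the triangle-inequality pathology). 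Finally, run Bourgain's embedding on this $O(n\poly\log n)$-edge graph; since single-source shortest paths on a sparse graph take $\tO(n)$ time and Bourgain uses $O(\log^2 n)$ random subsets, the total is $\tO(n)\cdot\poly\log n$, and the output lands in $\ell_1^{O(\log^2 n)}$ with $O(\log n)$ extra distortion. Composing the two stages and recording that each introduces only a $\poly\log n$ factor gives the claimed bound $\TEMD(A_i,A_j) \le \|q_i-q_j\|_1 \le \TEMD(A_i,A_j)\cdot\log^{O(1)} n$.

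\textbf{The main obstacle} I expect is the first stage: getting a $\TEMD$ embedding that is \emph{simultaneously} low-dimensional and exactly linear, so that the sliding window updates are $\tO(\alpha)$ amortized and the min-product repetitions stay at $\poly\log n$. Reconciling the Cauchy projection (which only gives constant-probability, constant-factor guarantees, and whose heavy tails threaten the upper bound on $\|q_i-q_j\|_1$) with the need for a high-probability two-sided bound over all $\binom{n-s+1}{2}$ pairs — while staying inside a min-product, which only helps with \emph{under}estimates — is the delicate point, and likely forces a careful truncation of the Cauchy variables plus a union bound argument. The second-stage issue that the min-semimetric need not satisfy the triangle inequality is handled by the $\gamma$-near-metric hypothesis, so it is a genuine but routine check rather than a real obstacle.
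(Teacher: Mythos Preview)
Your two-stage plan matches the paper's proof almost exactly: the first stage is Lemma~\ref{lem:slidingWindow} (grid-based $\TEMD\to\ell_1$ of~\cite{AIK} composed with Cauchy projections, repeated $O(\log n)$ times into a min-product, maintained by a sliding window), and the second stage is Lemmas~\ref{lem:l1ToHST}--\ref{lem:tree2Bourgain} (min-product of $\ell_1$'s $\to$ min-product of trees $\to$ sparse graph via the $\gamma$-near-metric hypothesis $\to$ Bourgain). The decomposition, the use of linearity for the sliding window, and the role of the near-metric assumption are all as in the paper.

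The one place where you overcomplicate things is the ``main obstacle'' you flag. You worry that the min-product can only tame one tail of the Cauchy projection and that truncation is needed for the other. In fact no truncation is required, because the two tails of an $\ell_1$ Cauchy sketch are asymmetric: with $k=O(\log^3 n)$ coordinates, the \emph{lower} tail of $\sum_j |\langle c_j, x-y\rangle|$ is light enough that non-contraction holds with probability $1-n^{-\Omega(1)}$ for each pair (this is the content of Theorem~5 of~\cite{I00b} invoked in the paper), so it survives a union bound over all $O(n^2)$ pairs and all $O(\log n)$ blocks of the min-product. Only the \emph{upper} bound on $\|q_i-q_j\|_1$ holds with merely constant probability per block, and that is precisely the direction the min over $O(\log n)$ independent blocks fixes. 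So the ``delicate point'' dissolves once you use the w.h.p.\ non-contraction of Cauchy sketches rather than treating both directions as constant-probability events.
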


To map the statement of this theorem to the above description, we
mention that, for each $l=m/b$ for $m\in W$, we apply
the theorem to vectors $\left(v_i^{(l-s+1)}\right)_{i\in [n-l+s]}$ for
each $s=1,2,4,8,\ldots 2^{\lfloor \log_2 l\rfloor}$.

We prove Theorem~\ref{thm:streamEMD} in later sections. Once
we have Theorem~\ref{thm:streamEMD}, it becomes relatively
straight-forward (albeit a bit technical) to prove the main theorem,
Theorem~\ref{thm:main}. We complete the proof of
Theorem~\ref{thm:main} next, assuming Theorem~\ref{thm:streamEMD}.

\begin{proof}[of Theorem~\ref{thm:main}]
We start by appending $y$ to the end of $x$; we will work with the
new version of $x$ only. Let $b=2^{\sqrt{\log n\log\log n}}$ and
$\alpha=O(b\log^3 n)$.  We construct vectors $v_i^{(m)}\in \R^\alpha$ for
$m\in W$, where $W\subset [n]$ is a carefully chosen set of size
$2^{O(\sqrt{\log n\log\log n})}$. Namely, $W$ is the minimal set such
that: $n\in W$, and, for each $i\in W$ with $i\ge b$, we have that
$i/b-2^j+1\in W$ for all integers $j\le \lfloor\log_2 i/b\rfloor$. It
is easy to show by induction that the size of $W$ is $2^{O(\sqrt{\log n\log\log n})}$.
We construct the vectors $v_i^{(m)}$ inductively in a bottom-up manner. We use vectors for small $m$ to build vectors for large $m$. $W$ is exactly the set of lengths $m$ that we need in the process.

Fix an $m\in W$ such that $m\le b^2=2^{2\sqrt{\log n\log\log n}}$. We
define the vector $v_i^{(m)}$ to be equal to $h_m(x[i:i+m-1])$, where
$h_m:\zo^m\to \zo^\alpha$ is a randomly chosen function. It is readily seen
that $\|v_i^{(m)}-v_j^{(m)}\|_1$ approximates $\ed(x[i:i+m-1],
x[j:j+m-1])$ up to $b^2=2^{2\sqrt{\log n\log\log n}}$ approximation
factor, for each $i,j\in[n-m+1]$.

Now consider $m\in W$ such that $m>b^2$. Let $l=m/b$. First we
construct vectors approximating TEMD on sets
$A_i^{m,s}=\left\{v_{i+z}^{(l-s+1)}\mid z=0,\ldots s-1\right\}$, where
$s=1,2,4,8,\ldots, l$ and $i\in[n-l+s]$. In particular, for a fixed
$s\in[l]$ equal to a power of 2, we apply Theorem~\ref{thm:streamEMD}
to the set of vectors $\left(v_i^{(l-s+1)}\right)_{i\in[n-l+s]}$
obtaining vectors
$\left(q_i^{(m,s)}\right)_{i\in[n-l+1]}$. Theorem~\ref{thm:streamEMD}
guarantees that, for each $i,j\in[n-l+1]$, the value
$\|q_i^{(m,s)}-q_j^{(m,s)}\|_1$ approximates
$\TEMD(A_i^{m,s},A_j^{m,s})$ up to a factor of $\log^{O(1)} n$. We can
then use these vectors $q_i^{(m,s)}$ to obtain the vectors
$v_i^{(m)}\in \R^{\alpha}$ that approximate the ``idealized'' distance
$d^*_m$ on substrings $x[i:i+m-1]$, for $i\in[n-m+1]$. Specifically,
we let the vector $v_i^{(m)}$ be a concatenation of vectors
$q_{i+(j-1)l}^{(m,s)}$, where $j\in[b]$, and $s$ goes over all powers of 2 less than
$l$:
$$
v_i^{(m)}=\Big( q_{i+(j-1)l}^{(m,s)}\Big)_{\stackrel{j\in[b]}{s=2^f\le
    l,f\in\N}}.
$$ Then, the vectors $v_i^{(m)}$ approximate the distance $d^*_m$
(given in Equation~\eqref{eqn:idealDist}) up to a~$\log^{O(1)}n$
approximation factor, with the sets $S_j^s(x[i:i+m-1])$ taken as
$$S_j^s(x[i:i+m-1])=A_{i+(j-1)l}^{m,s}
=\left\{v_{i+(j-1)l+z}^{(l-s+1)}\mid z=0,\ldots s-1\right\},$$ for
$i\in[n-m+1]$ and $j\in[b]$.

The algorithm finishes by outputting $\|v_{1}^{(n)}-v_{n+1}^{(n)}\|$,
which is an approximation to the edit distance between $x[1:n]$ and
$x[n+1:2n]=y$. The total running time is $O(|W|\cdot n\cdot
b^{O(1)}\cdot \log^{O(1)}n) = n\cdot 2^{O(\sqrt{\log n\log\log n})}$.

It remains to analyze the resulting approximation. Let $D_m$ be the
approximation achieved by vectors $v_i^{(k)}\in \ell_1$ for substrings
of $x$ of lengths $k$, where $k \in W$ and $k \le m$.
Then, using Fact~\ref{fct:ORrecursion} and the fact
that vectors $v_i^{(m)}\in \ell_1$ approximate $d^*_m$, we have that
$$
D_m\le \log^{O(1)} n\cdot \left(D_{m/b}+2^{\sqrt{\log n\log\log n}}\right).
$$

Since the total number of recursion levels is bounded by ${\log_b n}=\sqrt{\frac{\log
  n}{ \log\log n}}$, we deduce that $D_n=2^{O(\sqrt{\log n\log\log n})}$.
\end{proof}

\subsection{Proof of Theorem~\ref{thm:streamEMD}}

The proof proceeds in two stages. In the first stage we show an
embedding of the TEMD metric into a low-dimensional
space. Specifically, we show an (oblivious) embedding of TEMD into a
{\em min-product} of $\ell_1$. Recall that the min-product of $\ell_1$,
denoted $\bigoplus_{\min}^l\ell_1^k$, is a semi-metric where the
distance between two $l$-by-$k$ vectors $x,y\in\R^{l\times k}$ is
$d_{\min,1}(x,y)=\min_{i\in[l]}\left\{\sum_{j\in[k]}
|x_{i,j}-y_{i,j}|\right\}$. Our min-product of $\ell_1$'s has
dimensions $l=O(\log n)$ and $k=O(\log^3n)$. The min-product can be
seen as helping us on two fronts: one is the embedding of TEMD into
$\ell_1$ (of initially high-dimension), and another is a {\em weak}
dimensionality reduction in $\ell_1$, using Cauchy projections.  Both
of these embeddings are of the following form: consider a randomized
embedding $f$ into (standard) $\ell_1$ that has no contraction
(w.h.p.)  but the expansion is bounded only in the expectation (as
opposed to w.h.p.). To obtain a ``w.h.p.'' expansion, one standard
approach is to sample $f$ many times and concentrate the
expectation. This approach, however, will necessitate a high number of
samples of $f$, and thus yield a high final dimension. Instead,
the min-product allows us to take only $O(\log n)$ independent samples of
$f$.

We note that our embedding of TEMD into min-product of $\ell_1$,
denoted $\lambda$, is linear in the sets $A$: $\lambda(A)=\sum_{a\in
  A} \lambda(\{a\})$. The linearity allows us to compute the embedding
of sets $A_i$ in a streaming fashion: the embedding of $A_{i+1}$ is
obtained from the embedding of $A_i$ with $\log^{O(1)} n$ additional
processing. This stage appears in Section~\ref{sec:emd2min}.

In the second stage, we show that, given a set of $n$ points in
min-product of $\ell_1$'s, we can (non-obliviously) embed these points into
low-dimensional $\ell_1$ with $O(\log n)$ distortion. The time required is
near-linear in $n$ and the dimensions of the min-product of
$\ell_1$'s. 

To accomplish this step, we start by embedding the min-product of
$\ell_1$'s into a min-product of tree metrics.  Next, we show that $n$
points in the low-dimensional min-product of tree metrics can be
embedded into a graph metric supported by a {\em sparse} graph. We
note that this is in general not possible, with any (even
non-constant) distortion. We show that this is possible when we assume
that our subset of the min-product of tree metrics approximates some
actual metric (in our case, the min-product approximates the TEMD
metric). Finally, we observe that we can implement Bourgain's
embedding in near-linear time on a sparse graph metric. This stage
appears in Section~\ref{sec:min2l1}.

We conclude with the proof of Theorem~\ref{thm:streamEMD} in
Section~\ref{sec:pfStEMD}.

\subsubsection{Embedding EMD into min-product of $\ell_1$}
\label{sec:emd2min}

In the next lemma, we show how to embed TEMD into a min-product of
$\ell_1$'s of low dimension. Moreover, when the sets $A_i$ are
obtained from a sequence of vectors $v_1,\ldots v_n$, by taking
$A_i=\{v_i,\ldots v_{i+s-1}\}$, we can compute the embedding in
near-linear time.

\begin{lemma}
\label{lem:slidingWindow}
Fix $n,M\in \N$ and $s\in[n]$. Suppose we have $n$ vectors $v_1,\ldots
v_n$ in $\{-M,-M+1,\ldots,M\}^\alpha$ for some $\alpha\le n$. Consider the sets
$A_i=\{v_i,v_{i+1},\ldots v_{i+s-1}\}$, for $i\in[n-s+1]$.

Let $k=O(\log^3 n)$. We can compute (randomized) vectors
$q_i\in\ell_1^k$ for $i\in[n-s+1]$ such that, for any $i,j\in[n-s+1]$
we have that
\begin{itemize}
\item
$
\Pr\Big[\|q_i-q_j\|_1\le \TEMD(A_i,A_j)\cdot O(\log^2 n)\Big] \ge 0.1
$ and
\item
$\|q_i-q_j\|_1\ge \TEMD(A_i,A_j)$ w.h.p.  
\end{itemize}
The computation time
is $\tO(n\alpha)$.

Thus, we can embed the $\TEMD$ metric over sets $A_i$ into
$\bigoplus_{\min}^l \ell_1^k$, for $l=O(\log n)$, such that the
distortion is $O(\log^2 n)$ w.h.p. The computation time is $\tO(n\alpha)$.
\end{lemma}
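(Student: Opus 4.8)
\medskip\noindent\emph{Proof plan.}
The plan is to build $q_i$ as a composition of two \emph{linear} maps, realizing the two-front strategy sketched above. First I would recall the classical grid (quadtree) embedding of Earth-Mover Distance into $\ell_1$: impose $O(\log(sM))=O(\log n)$ nested, independently randomly-shifted grids on $\{-M,\dots,M\}^\alpha$ at geometric side-lengths $2^0,2^1,2^2,\dots$, cut off at the scale $\Theta(\log s)$ corresponding to the threshold $t=s$, and map a set $A$ to the vector $\lambda_0(A)$ indexed by grid cells $c$, with $c$-th coordinate $(\operatorname{diam} c)\cdot|A\cap c|$. This map is linear, $\lambda_0(A)=\sum_{a\in A}\lambda_0(\{a\})$, and each $\lambda_0(\{a\})$ is supported on only $O(\log n)$ cells (one per scale); the induced $\ell_1$ distance equals the transportation cost in the random hierarchy, which never contracts $\TEMD$ up to a constant (a transport plan in the hierarchy projects to a bijection of no larger metric cost, truncated at $O(t)$), and expands $\TEMD$ by a factor $O(\log n)$ \emph{in expectation} by the standard charging over the $O(\log n)$ scales. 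Then I would compose $\lambda_0$ with a Cauchy ($1$-stable) projection $P\colon \R^{\text{(cells)}}\to\R^k$, $k=O(\log^3 n)$, whose entries are produced by a pseudorandom generator keyed by (scale, cell, output coordinate) so that $P\lambda_0(\{a\})$ is computable directly from $a$ in $\tO(\alpha)$ time; set $q_i:=\tfrac1{c_k}\,P\,\lambda_0(A_i)$ with normalization $c_k=\Theta(k)$.

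For the two bullets, fix $i,j$ and put $z=\lambda_0(A_i)-\lambda_0(A_j)$, so $(Pz)_r=\|z\|_1 Z_r$ with $Z_1,\dots,Z_k$ i.i.d.\ standard Cauchy and $\|z\|_1\ge\TEMD(A_i,A_j)$ deterministically. Since $\Pr[|Z_r|\ge1]=\tfrac12$, a Chernoff bound gives $\sum_r|Z_r|\ge k/3\ge c_k$ with probability $1-e^{-\Omega(k)}=1-n^{-\Omega(1)}$, whence $\|q_i-q_j\|_1=\tfrac{\|z\|_1}{c_k}\sum_r|Z_r|\ge\|z\|_1\ge\TEMD(A_i,A_j)$ w.h.p.\ (the second bullet, with $k=\Omega(\log n)$). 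For the first bullet, Markov on the expectation bound gives $\|z\|_1=O(\log n)\cdot\TEMD(A_i,A_j)$ with probability $\ge0.9$, while a truncation argument for the heavy-tailed Cauchy sum (discard the $o(1)$-probability event that some $|Z_r|$ exceeds a $\poly(n)$ cutoff, then use $\mathbb E[\sum_r\min(|Z_r|,\operatorname{cutoff})]=O(k\log n)$ and Markov) yields $\sum_r|Z_r|=O(c_k\log n)$ with probability $\ge0.95$; multiplying, $\|q_i-q_j\|_1=O(\log^2 n)\cdot\TEMD(A_i,A_j)$ with probability $\ge0.1$.

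Finally I would take $l=O(\log n)$ independent copies $q^{(1)}_i,\dots,q^{(l)}_i$ and regard $(q^{(1)}_i,\dots,q^{(l)}_i)$ as a point of $\bigoplus_{\min}^l\ell_1^k$, whose distance is $\min_{r\le l}\|q^{(r)}_i-q^{(r)}_j\|_1$: the min is $\ge\TEMD$ w.h.p.\ because every copy is (union bound over the $l$ copies and over all $O(n^2)$ pairs, absorbed into $k=O(\log^3 n)$), and the min is $\le O(\log^2 n)\cdot\TEMD$ w.h.p.\ because with probability $1-0.9^l\ge1-n^{-\Omega(1)}$ at least one copy satisfies the first bullet (again union over all pairs). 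For the running time, I would maintain the $l$ linear sketches incrementally along the sliding window, $\lambda_0(A_{i+1})=\lambda_0(A_i)-\lambda_0(\{v_i\})+\lambda_0(\{v_{i+s}\})$: each incoming/outgoing vector contributes a $O(\alpha\log n)$-sparse update to $\lambda_0$, whose image under $P$ costs $O(k\alpha\log n)$ per copy; over $n$ steps and $l$ copies this is $n\cdot l\cdot O(k\alpha\log n)=\tO(n\alpha)$. The step I expect to be the main obstacle is pushing a \emph{with-high-probability} (not just constant-probability) non-contraction through the heavy-tailed Cauchy projection while keeping the target dimension polylogarithmic — this is exactly what the min-product buys us, letting us combine $O(\log n)$ cheap samples instead of paying for genuine concentration — together with the bookkeeping needed to make the thresholded grid embedding simultaneously non-contracting, $O(\log n)$-expanding in expectation, and linear with polylog-sparse support.
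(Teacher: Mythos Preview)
Your proposal is correct and follows essentially the same route as the paper: the AIK-style randomly-shifted-grid embedding of $\TEMD$ into high-dimensional $\ell_1$, composed with a Cauchy projection for dimension reduction, exploiting linearity for the sliding-window update, and then taking $l=O(\log n)$ independent copies into a min-product. The only differences are expository: you analyze the Cauchy sum $\sum_r|Z_r|$ directly (Chernoff on $\{|Z_r|\ge 1\}$ for the w.h.p.\ lower bound, truncation plus Markov for the constant-probability upper bound), whereas the paper simply cites Indyk's $\ell_1$ sketching theorem; and you are explicit about generating the Cauchy entries on demand via a PRG keyed by the cell index, a point the paper leaves implicit when it says one reads off ``the corresponding column of $P$'' from a matrix with $M^{O(\alpha)}$ columns. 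Your accounting of the $O(\log^2 n)$ factor (one $\log n$ from the grid's expected expansion, one from the Cauchy tail) differs slightly from the paper's (which puts both logs into the scaled grid embedding and treats the Cauchy step as constant-factor), but the endpoint is the same.
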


\begin{proof}
First, we show how to embed TEMD metric over the sets $A_i$ into
$\ell_1$ of dimension $M^{O(\alpha)} \cdot O(\log n)$.
For this purpose, we use a slight
modification of the embedding of~\cite{AIK} (it can also be seen as a
strengthening of the TEMD embedding of Ostrovsky and
Rabani).

The embedding of~\cite{AIK} constructs $m=O(\log s)$ embeddings
$\psi_i$, each of dimension $h=M^{O(\alpha)}$, and then the final
embedding is just the concatenation $\psi=\psi_1\circ \psi_2\ldots
\circ \psi_m$. For $i=1,\ldots m$, we impose a randomly shifted grid
of side-length $R_i=2^{i-2}$. That is, let $\Delta_i=(\delta_{i,1},\ldots,\delta_{i,\alpha})$
be selected uniformly at random from $[0,1)^\alpha$. A specific vector $v_j$ falls into the cell $(c_1,\ldots,c_\alpha)$, where $c_t = \lfloor v_{j,t}/R_i + \delta_{i,t} \rfloor$ for $t = 1,\ldots,\alpha$.
Then $\psi_i$ has a coordinate for each
cell $(c_1,\ldots,c_\alpha)$, where $0 \le c_t \le 2M/R_i+1$ for $t=1,\ldots,\alpha$.
These are the only cells that can be non-empty, and there
is at most $(2M/R_i+1)^\alpha = M^{O(\alpha)}$ of them. The value of a specific coordinate, for a set $A$, equals the
number of vectors from $A$ falling into the corresponding cell
times $R_i$. Now, if
we scale $\psi$ up by a factor of $\Theta(\tfrac{1}{s}\log n)$, Theorem~3.1 from~\cite{AIK}%
\footnote{Note that Theorem~3.1 from~\cite{AIK} is stated for EMD, and here we are concerned with TEMD. Nevertheless, the whole statement still applies, because the side of the largest grid is bounded by $O(s)$ .}
says that the vectors $q_i'=\psi(A_i)$ satisfy the condition that, for
any $i,j\in[n-s+1]$, we have:
\begin{itemize}
\item
$\E{}{\|q'_i-q'_j\|_1}\le
\TEMD(A_i,A_j)\cdot O(\log^2 n)$ and
\item
$\|q'_i-q'_j\|_1\ge
\TEMD(A_i,A_j)$ w.h.p. 
\end{itemize}
Thus, the vectors $q'_i$ satisfy the promised properties except they
have a high dimension.

To reduce the dimension of $q_i'$'s, we apply a weak $\ell_1$
dimensionality reduction via 1-stable (Cauchy) projections. Namely, we
pick a random matrix $P$ of size $k=O(\log^3 n)$ by $mh = O(\log s) \cdot M^{O(\alpha)}$, the
dimension of $\psi$, where each entry is distributed according to the
Cauchy distribution, which has probability distribution function
$f(x)=\tfrac{1}{\pi}\cdot\tfrac{1}{1+x^2}$. Now define $q_i=P\cdot q'_i\in
\ell_1^{k}$. Standard properties of the $\ell_1$ dimensionality reduction
guarantee that the vectors $q_i$ satisfy the properties promised in
the lemma statement, after an appropriate rescaling (see Theorem 5
of~\cite{I00b} with $\eps=1/2$, $\gamma=1/6$, and $\delta=n^{-O(1)}$).

It remains to show that we can compute the vectors $q_i$ in $\tO(n\alpha)$
time. To this end, observe that the resulting embedding $P\cdot \psi(A)$
is linear, namely $P\cdot \psi(A)=\sum_{a\in A} P\cdot
\psi(\{a\})$. Moreover, each $P\cdot\psi(\{v_{i}\})$ can be computed in $\alpha\cdot
\log^{O(1)} n$ time, because $\psi(\{v_{i}\})$ has exactly one non-zero coordinate,
which can be computed in $O(\alpha)$ time, and then $P\cdot\psi(\{v_{i}\})$ is simply
the corresponding column of $P$ multiplied by the non-empty coordinate of $\psi(\{v_{i}\})$.
To obtain the first vector $q_1$, we compute the summation of all corresponding $P\cdot\psi(\{v_{i}\})$.
To compute the remaining vectors $q_i$ iteratively, we use the idea of a sliding window over the
sequence $v_1,\ldots v_n$. Specifically, we have
$$q_{i+1} = P\cdot \psi(A_{i+1})=P\cdot
\psi(A_i\cup\{v_{i+s}\}\setminus \{v_{i}\}) =
q_i+P\cdot\psi(\{v_{i+s}\})-P\cdot\psi(\{v_{i}\}),$$ 
which implies that $q_{i+1}$ can be computed in $\alpha\cdot
\log^{O(1)} n$ time, given the value of $q_{i}$.
Therefore, the total time required to compute
all $q_i$'s is $O(n\alpha\cdot \log^{O(1)} n)$.

Finally, we show how we obtain an efficient embedding of TEMD into
min-product of $\ell_1$'s.
We apply the above procedure $l=O(\log n)$ times. Let $q_i^{(z)}$ be
the resulting vectors, for $i\in[n-s+1]$ and $z\in[l]$. The embedding
of a set $A_i$ is the concatenation of the vectors $q_i^{(z)}$,
namely $Q_i=(q_i^{(1)}, q_i^{(2)},\ldots
q_i^{(l)})\in\bigoplus_{\min}^l\ell_1^k$. The Chernoff bound implies
that w.h.p., for any $i,j\in[n-s+1]$, we have that
$$
d_{\min,1}(Q_i,Q_j)=\min_{z\in[l]} \|q_i^{(z)}-q_j^{(z)}\|\le
\TEMD_s(A_i,A_j)\cdot O(\log^2 n).
$$
Also, $d_{\min,1}(Q_i,Q_j)\ge \TEMD_s(A_i,A_j)$ w.h.p. trivially. Thus the
vectors $Q_i$ are an embedding of the TEMD metric on $A_i$'s into
$\bigoplus_{\min}^l\ell_1^k$ with distortion $O(\log^2 n)$ w.h.p.
\end{proof}

\subsubsection{Embedding of min-product of $\ell_1$ into
  low-dimensional $\ell_1$}
\label{sec:min2l1}

In this section, we show that $n$ points $Q_1,\ldots Q_n$ in the
semi-metric space $\bigoplus_{\min}^l\ell_1^k$ can be embedded into
$\ell_1$ of dimension $O(\log^2 n)$ with distortion $\log^{O(1)} n$. The
embedding works under the assumption that the semi-metric on
$Q_1,\ldots Q_n$ is a $\log^{O(1)} n$ approximation of some metric. We
start by showing that we can embed a min-product of $\ell_1$'s into
a min-product of tree metrics.

\begin{lemma}
\label{lem:l1ToHST}
Fix $n,M\in\N$ such that $M = n^{O(1)}$.  Consider $n$ vectors
$v_1,\ldots v_n$ in $\bigoplus_{\min}^l \ell_1^k$, for some $l,k\in
\N$, where each coordinate of each $v_i$ lies in the set $\{-M,
\ldots, M\}$. We can embed these vectors into a min-product of
$O(l\cdot\log^2 n)$ tree metrics, i.e., $\bigoplus_{\min}^{O(l\log^2
  n)} \TM$, incurring distortion $O(\log n)$ w.h.p. The computation
time is $\tO(n\cdot kl)$.
\end{lemma}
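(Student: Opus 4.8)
The plan is to embed each of the $l$ coordinate copies of $\ell_1^k$ separately into a small min-product of tree metrics, and then concatenate. Since the min-product of min-products is again a min-product (with the number of factors multiplied), it suffices to show: a single $\ell_1^k$ with integer coordinates in $\{-M,\dots,M\}$ embeds into $\bigoplus_{\min}^{O(\log^2 n)}\TM$ with $O(\log n)$ distortion in $\tO(nk)$ time. For this, I would first embed $\ell_1^k$ into a min-product of $1$-dimensional $\ell_1$'s. This is the standard trick for min-products: a single coordinate $|x_t - y_t|$ is a lower bound for $\|x-y\|_1$, but the min over all $k$ coordinates undershoots badly, so instead I take random weighted combinations — i.e., pick $O(\log n)$ random sign/scaling vectors (or use the fact that $\|x-y\|_1 = \sum_t |x_t-y_t|$ and sample a coordinate proportional to a random partition) — actually the cleaner route is: since the points are fixed, use the classical result that $\ell_1^k$ on $n$ points embeds into a distribution over dominating line metrics (one-dimensional $\ell_1$'s) such that the expected contraction is $O(\log n)$; taking $O(\log n)$ independent samples and forming their min-product yields, by a Chernoff/union bound over the $\binom n2$ pairs, an $O(\log n)$-distortion embedding into $\bigoplus_{\min}^{O(\log n)}\ell_1^1$.

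Next I would convert each one-dimensional $\ell_1$ (a line metric on integer points in $\{-M,\dots,M\}$) into a tree metric — trivially it already is a (path) tree metric, so this step is free; but to get a genuinely useful structure I would instead embed each $\ell_1^1$ into a distribution over $2$-HSTs via the standard randomly-shifted hierarchical decomposition of an interval of length $O(M) = n^{O(1)}$, which has $O(\log M) = O(\log n)$ levels and $O(\log n)$ expected distortion, sampling $O(\log n)$ of these and taking their min-product. Composing the two stages, a single $\ell_1^k$ embeds into $\bigoplus_{\min}^{O(\log^2 n)}\TM$ with distortion $O(\log n)\cdot O(\log n)$; since the lemma only claims distortion $O(\log n)$, I'd be more careful and note that the line-metric-to-HST step can be done with $O(1)$ distortion if we only need a $2$-HST approximation to within constant factors after the shift is chosen well — or, more simply, absorb one logarithmic factor by observing that the first stage (coordinate sampling) already outputs line metrics, which are exact tree metrics, so only one $O(\log n)$ loss is incurred and the HST refinement is optional. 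Concatenating over all $l$ coordinate copies multiplies the number of tree factors by $l$, giving $\bigoplus_{\min}^{O(l\log^2 n)}\TM$ as claimed (the extra $\log^2 n$ rather than $\log n$ gives room for the union bound).

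For the running time: each of the $l$ coordinate copies is processed in $O(n\log^{O(1)}n)$ time — drawing the $O(\log^2 n)$ random line-metric/HST samples is $\polylog(n)$ work, and computing each point's image under a fixed line metric or HST-decomposition is $O(\log n)$ per point since $M = n^{O(1)}$ — so the total is $\tO(nl)$ for the coordinate-reduction plus a further $\tO(nk)$ to read the $k$-dimensional input once when forming the random combinations, i.e.\ $\tO(n\cdot kl)$ overall. The correctness bound is a union bound: for each of the $O(n^2)$ pairs $(v_i,v_j)$, in a fixed coordinate copy the min over the $O(\log^2 n)$ sampled tree metrics is at least the true $\ell_1$ distance in that copy (every tree in the support dominates, by construction), and is at most $O(\log n)$ times it except with probability $n^{-\Omega(1)}$ by concentrating the expected-$O(\log n)$-contraction guarantee over $\Theta(\log n)$ samples; taking the outer min over the $l$ copies preserves both directions. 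The main obstacle I anticipate is pinning down the exact distortion constant in the composition so that it reads $O(\log n)$ and not $O(\log^2 n)$ — this forces the first stage to output line metrics directly (which are already trees) rather than genuine HSTs, or else requires the assumption, available here, that the input min-product already approximates a metric, which lets one replace the worst-case $\ell_1\to\mathrm{tree}$ distortion by the (better) behavior on the specific $n$-point set via a Bourgain-type argument. Everything else is a routine assembly of known randomized tree-embedding tools.
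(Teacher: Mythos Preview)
Your proposal has a genuine gap in its first stage. You invoke a ``classical result that $\ell_1^k$ on $n$ points embeds into a distribution over dominating line metrics'' with expected stretch $O(\log n)$, but no such result exists: random sign or Cauchy projections onto a line preserve $\ell_1$ distance only in median and do \emph{not} dominate (for any fixed pair there is constant probability of arbitrary contraction), and sampling a single coordinate contracts by a factor of~$k$. So the sentence ``every tree in the support dominates, by construction'' is exactly the step that fails. The FRT-type theorem you may have in mind produces dominating \emph{tree} metrics, not line metrics---and invoking generic FRT on $n$ points in $\ell_1^k$ costs $\Omega(n^2)$ time to form the distance matrix, which breaks the claimed $\tO(nkl)$ bound. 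Your subsequent hedges (HST refinement, exploiting that the input is a near-metric) do not repair this, because the problem is getting a \emph{dominating} family in the first place, not tightening the stretch constant.

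The paper's route is more direct and is precisely the efficient $\ell_1$-specific substitute for FRT that you need. For each of the $l$ copies of $\ell_1^k$ and each of the $O(\log M)=O(\log n)$ dyadic scales $2^t$, it lays down an independently and randomly shifted axis-aligned grid of side $2^t$ in $\R^k$ and builds a two-level tree: points in the same cell sit at mutual distance $2^t$, points in different cells at distance $2Mk$. This is repeated $O(\log n)$ times per (copy, scale) pair, giving $O(l\log^2 n)$ trees in all. Domination (after a global $O(\log n)$ rescaling) comes from the bound $\Pr[h(x)=h(y)]\le e^{-\|x-y\|_1/2^t}$, so at scales $2^t\le \|x-y\|_1/(c\log n)$ the pair is separated in every tree w.h.p.; the upper bound comes from $\Pr[h(x)=h(y)]\ge 1-\|x-y\|_1/2^t$, so at the scale $2^t$ just above $\|x-y\|_1$ some tree places them at distance $\le 2^{t+1}$. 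Crucially, hashing one point into one grid costs $O(k)$ time, which is what yields $\tO(nkl)$. Your high-level skeleton (handle each min-product coordinate separately, build $\polylog(n)$ random dominating trees per copy, take the overall min) is right; the missing piece is this randomly-shifted-grid construction in $\ell_1^k$, which simultaneously delivers domination, the single $O(\log n)$ distortion factor you were worried about, and linear-time computability.
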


\begin{proof}
We consider all thresholds $2^t$, for $t \in \{0,1,\ldots, \log M\}$.
For each threshold $2^t$, and for each coordinate of the min-product
(i.e., $\ell_1^k$), we create $O(\log n)$ tree metrics. Each tree metric is
independently created as follows.  We again use randomly shifted
grids. Specifically, we define a hash function $h:\ell_1^k \to \Z^k$ as
$$h(x_1,\ldots,x_k) = \left(
\left\lfloor\frac{x_1+u_1}{2^t}\right\rfloor,
\left\lfloor\frac{x_2+u_2}{2^t}\right\rfloor, \ldots,
\left\lfloor\frac{x_k+u_k}{2^t}\right\rfloor\right),$$ where each
$u_t$ is chosen at random from $[0,2^t)$.  We create each tree metric
  so that the nodes corresponding to the points hashed by $h$ to the
  same value are at distance $2^t$ (this creates a set of stars), and
  each pair of points that are hashed to different values are at
  distance $2Mk$ (we connect the roots of the stars).  

For two points $x,y\in\ell_1^k$, the probability that they are separated by the grid in the $i$-th dimension is at most $|x_i -y_i|/2^t$, which implies by the union bound that
$$\Pr_h[h(x)=h(y)] \ge 1 - \sum_i \frac{|x_i - y_i|}{2^t} = 1 -\frac{\|x-y\|_1}{2^t}.$$
On the other hand, the probability that $x$ and $y$ are not separated by the grid in the $i$-th dimension is $\max\{1 - |x_i - y_i|/2^t, 0\} \le e^{-|x_i - y_i|/2^t}$. Since the grid is shifted independently in each dimension,
$$\Pr_h[h(x)=h(y)] \le \prod_{i=1}^{k} e^{-|x_i - y_i|/2^t} = e^{-\sum_{i=1}^{k}|x_i - y_i|/2^t} = 
e^{-\|x-y\|_1/2^t}.$$

By the Chernoff bound, if $x,y \in \ell_1^k$
  are at distance at most $2^t$ for some $t$, they will be at distance
  at most $2^{t+1}$ in one of the tree metrics with high probability.
On the other hand, let $v_i$ and $v_j$ be two input vectors at
distance greater than $2^t$. The probability that they are at distance
smaller than $2^t / c\log n$ in any of the $O(\log^2 n)$ tree
metrics, is at most $n^{-c+1}$ for any $c>0$, by the union bound.

Therefore, we multiply the weights of all edges in all trees by $O(\log n)$ 
to achieve a proper (non-contracting) embedding. 
\end{proof}

We now show that we can embed a subset of the min-product of tree
metrics into a graph metric, assuming the subset is close to a metric.

\begin{lemma}
\label{lem:HST2tree}
Consider a semi-metric $\M=(X,\xi)$ of size $n$ in
$\bigoplus_{\min}^{l} \TM$ for some $l\in \N$, where each tree metric
in the product is of size $O(n)$. Suppose $\M$ is a $\gamma$-near
metric (i.e., it is embeddable into a metric with $\gamma$
distortion). Then we can embed $\M$ into a connected weighted graph
with $O(nl)$ edges with distortion $\gamma$ in $O(nl)$ time.
\end{lemma}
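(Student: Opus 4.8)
Looking at Lemma~\ref{lem:HST2tree}, I need to embed a semi-metric $\M = (X, \xi)$ that lives in $\bigoplus_{\min}^l \TM$ and is a $\gamma$-near metric into a connected weighted graph with $O(nl)$ edges, with distortion $\gamma$, in $O(nl)$ time.

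Let me think about the structure here. Each of the $l$ coordinates is a tree metric on $O(n)$ nodes. The semi-metric $\xi$ between two points $x, y \in X$ is $\min_{i \in [l]} T_i(x_i, y_i)$ where $T_i$ is the $i$-th tree metric and $x_i$ is the image of $x$ in the $i$-th tree.

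The natural idea: take the disjoint union of the $l$ trees, which is a forest with $O(nl)$ total edges. For each point $x \in X$, it has a representative node in each of the $l$ trees. I want these $l$ representatives to be "identified" somehow so that the shortest path between $x$'s copy and $y$'s copy is exactly $\min_i T_i(x_i, y_i)$.

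The issue: if I literally merge all $l$ copies of $x$ into one vertex, then a path could go: $x$'s merged vertex $\to$ (through tree $i$) $\to$ some other point $z$'s merged vertex $\to$ (through tree $j$) $\to$ $y$'s merged vertex. This would give distance $\le T_i(x,z) + T_j(z,y)$, which could be *smaller* than $\min_i T_i(x,y)$ — contraction! This is exactly why the lemma needs the $\gamma$-near metric assumption: such "shortcut" paths correspond to going through the underlying metric $d^*$, and $\xi(x,z) + \xi(z,y) \ge d^*(x,z) + d^*(z,y) \ge d^*(x,y) \ge \xi(x,y)/\gamma$. So the shortcutting can only lose a factor of $\gamma$.

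So here is my plan. First I would construct the graph $G$ as follows: take the disjoint union of the $l$ trees $T_1, \ldots, T_l$ (total $O(nl)$ edges), and then for each $x \in X$, add edges (or identify vertices) linking the $l$ representatives of $x$ across the trees — specifically, identify all $l$ copies of each point $x$ into a single super-vertex $v_x$ (or equivalently add zero-weight edges connecting them; to keep positive weights one can use tiny weights, but identification is cleanest). This graph is connected (assuming $X$ is nonempty and the trees connect things up; if some tree-nodes are not representatives of points in $X$ they still sit inside their tree), has $O(nl)$ edges, and is built in $O(nl)$ time. Second, I would prove the upper bound $d_G(v_x, v_y) \le \xi(x,y)$: this is immediate, since for the index $i$ achieving the minimum, the path through tree $T_i$ from the $i$-th copy of $x$ to the $i$-th copy of $y$ has length $T_i(x_i,y_i) = \xi(x,y)$, and the copies are identified with $v_x, v_y$. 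Third — the main step — I would prove the lower bound $d_G(v_x, v_y) \ge \xi(x,y)/\gamma$. Take any shortest path in $G$ from $v_x$ to $v_y$. It passes through a sequence of super-vertices $v_x = v_{z_0}, v_{z_1}, \ldots, v_{z_r} = v_y$, where consecutive super-vertices are connected by a sub-path lying entirely within a single tree $T_{i_j}$ (since the only way to switch trees is at a super-vertex). The length of the $j$-th sub-path is at least $T_{i_j}(z_{j-1}, z_j) \ge \xi(z_{j-1}, z_j)$. So $d_G(v_x, v_y) \ge \sum_j \xi(z_{j-1}, z_j)$. Now invoke the $\gamma$-near metric hypothesis: there is a genuine metric $d^*$ with $d^* \le \xi \le \gamma d^*$, so $\sum_j \xi(z_{j-1}, z_j) \ge \sum_j d^*(z_{j-1}, z_j) \ge d^*(x,y) \ge \xi(x,y)/\gamma$ by the triangle inequality for $d^*$. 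Hence $d_G(v_x, v_y) \ge \xi(x,y)/\gamma$, giving distortion at most $\gamma$.

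The main obstacle is exactly the lower bound: understanding that shortcut paths through the graph cannot contract distances by more than $\gamma$, and this is precisely where — and the only place where — the near-metric hypothesis is used. One technical point to handle carefully: the trees $T_i$ may contain nodes that are not images of any point of $X$ (the lemma says each tree has size $O(n)$, not that every node is a representative); a shortest path could wander into such nodes, but since we only ever need to \emph{exit} a tree at a super-vertex to make progress, and the within-tree distance between any two nodes is still a genuine tree metric satisfying the triangle inequality, the sub-path between consecutive super-vertices still has length $\ge T_{i_j}(z_{j-1},z_j)$, so the argument is unaffected. Another point: one should note the $l$ trees may need a common scale / connectivity — but connectivity of $G$ is automatic once all super-vertices exist and each tree is internally connected, and if $X$ itself is connected as a point set (it is, being finite and nonempty) we are fine. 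Finally, I would remark that the $O(nl)$ running time is clear since we merely concatenate the (already given) tree descriptions and merge vertices via a union-find or direct relabeling in linear time.
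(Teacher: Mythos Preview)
Your proposal is correct and follows essentially the same approach as the paper: build the graph by taking the $l$ trees and identifying, for each point of $X$, its $l$ copies into a single vertex; the upper bound is immediate via the tree achieving the minimum, and the lower bound decomposes a shortest path at the identified vertices and uses the underlying metric $d^*$ together with the triangle inequality to bound contraction by $\gamma$. The paper concludes by scaling edge weights by $\gamma$ to make the map non-contracting, but this is equivalent to your distortion statement.
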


\begin{proof}
We consider $l$ separate trees each on $O(n)$ nodes, corresponding to
each of $l$ dimensions of the min-product.  We identify the nodes of
trees that correspond to the same point in the min-product, and
collapse them into a single node.  The graph we obtain has at most
$O(nl)$ edges. Denote the shortest-path metric it spans with
$\M'=(V,\rho)$, and denote our embedding with $\phi:X \to V$.
Clearly, for each pair $u,v$ of points in $X$, we have
$\rho(\phi(u),\phi(v)) \le \xi(u,v)$. If the distance between two
points shrinks after embedding, then there is a sequence of points $w_0=u$,
$w_1$, \ldots, $w_{k-1}$, $w_k=v$ such that $\rho(\phi(u),\phi(v)) =
\xi(w_0,w_1) + \xi(w_1,w_2) + \cdots + \xi(w_{k-1},w_k)$.  Because
$\M$ is a $\gamma$-near metric, there exists a metric $\xi^{\star}:X
\times X \to [0,\infty)$, such that $ \xi^\star(x,y) \le \xi(x,y) \le
  \gamma \cdot \xi^\star(x,y)$, for all $x,y\in X$.  Therefore,
\begin{equation*}
\rho(\phi(u),\phi(v)) = \sum_{i=0}^{k-1} \xi(w_i,w_{i+1})
\ge \sum_{i=0}^{k-1}\xi^\star(w_i,w_{i+1})
\ge \xi^\star(w_0,w_k) =  \xi^\star(u,v) \ge \xi(u,v) / \gamma.
\end{equation*}
Hence, it suffices to multiply all edge weights of the graph by
$\gamma$ to achieve a non-contractive embedding. Since there was no expansion
before, it is now bounded by $\gamma$.
\end{proof}

We now show how to embed the shortest-path metric of a graph into a
low dimensional $\ell_1$-space in time near-linear in the graph size.
For this purpose, we implement Bourgain's embedding~\cite{Bou} in
near-linear time. We use the following version of Bourgain's
embedding, which follows from the analysis in~\cite{Matousek-book}.

\begin{lemma}[Bourgain's embedding~\cite{Matousek-book}]
\label{lemma:bourgain}
Let $\M=(X,\rho)$ be a finite metric on $n$ points.  There is
an algorithm that computes an embedding $f:X \to \ell_1^t$ of
$\mathcal M$ into $\ell_1^t$ for $t=O(\log^2 n)$ such that, with high
probability, for each $u,v \in X$, we have $\rho(u,v)\le\| f(u) - f(v)
\|_1 \le \rho(u,v) \cdot O(\log n)$. 

Specifically, for coordinate $i\in[k]$ of $f$, the embedding associates
a nonempty set $A_i\subseteq X$ such that $f(u)_i=\rho(u,A_i)=\min_{a\in
  A_i}\rho(u,a)$. Each $A_i$ is samplable in linear time.

The running time of the algorithm is $O(g(n) \cdot \log^2 n)$, where
$g(n)$ is the time necessary to compute the distance of all points to
a given fixed subset of points.
\end{lemma}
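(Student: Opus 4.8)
The plan is to follow the classical Bourgain construction literally, but to be careful that every step runs in the promised time given only an oracle for ``distance from all points to a fixed subset.'' First I would set up the sets $A_i$: for each scale $j \in \{1,\ldots,\lceil \log_2 n\rceil\}$ and each of $\Theta(\log n)$ repetitions, sample $A_i$ by including each point of $X$ independently with probability $2^{-j}$. This gives $t = O(\log^2 n)$ sets in total, each clearly samplable in $O(n)$ time, and each coordinate of $f$ is defined by $f(u)_i = \rho(u,A_i)$. The two correctness claims---non-contraction $\|f(u)-f(v)\|_1 \ge \rho(u,v)$ and the upper bound $\|f(u)-f(v)\|_1 \le O(\log n)\cdot\rho(u,v)$ with high probability---are exactly the content of the analysis in \cite{Matousek-book}, so I would simply invoke that analysis: non-contraction is deterministic (each coordinate is $1$-Lipschitz, and at the scale matching $\rho(u,v)$ one gets a coordinate contributing the full distance), and the distortion bound follows from the standard union bound over the $\binom{n}{2}$ pairs after the probability amplification given by the $\Theta(\log n)$ repetitions per scale. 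Raising the failure probability to $1 - n^{-O(1)}$ only costs a constant factor more repetitions, which is absorbed into $t = O(\log^2 n)$.

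The running-time argument is the part that actually needs the hypothesis. Computing the coordinate $f(u)_i = \rho(u,A_i)$ for all $u$ simultaneously is precisely one call to the assumed subroutine that computes the distance from every point to a fixed subset, costing $g(n)$ time. Since there are $t = O(\log^2 n)$ coordinates, and sampling each $A_i$ is only $O(n) \le O(g(n))$, the total is $O(g(n)\cdot \log^2 n)$ as claimed. I would state explicitly that this is the only place the graph structure enters: for the sparse graph produced by Lemma~\ref{lem:HST2tree}, with $O(nl)$ edges, a multi-source Dijkstra (or BFS-like relaxation, since we may round weights) computes $\rho(\cdot, A_i)$ in $\tO(nl)$ time, so $g(n) = \tO(nl)$ there---but the lemma itself is stated abstractly in terms of $g$, so no graph-specific reasoning is needed in its proof.

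The main obstacle, such as it is, is not mathematical depth but bookkeeping: making sure the ``specifically'' clause is honored, i.e.\ that the embedding really is of the pure form $f(u)_i = \rho(u,A_i)$ with nonempty $A_i$ (one should discard or resample any empty set, which happens with probability $(1-2^{-j})^n$ and is harmless), and that the distortion constants from \cite{Matousek-book} genuinely give non-contraction rather than merely bi-Lipschitz equivalence up to a constant---this is why the $A_i$ must be distances-to-sets (a $\min$) rather than, say, random $\pm 1$ combinations, since only the former is automatically $1$-Lipschitz and attains the lower bound at the right scale. Once those points are nailed down, the proof is a direct citation of the known analysis together with the one-line observation that each coordinate is one invocation of the distance-to-subset oracle.
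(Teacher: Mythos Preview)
The paper does not actually prove this lemma: it is stated as a citation of the analysis in \cite{Matousek-book}, with no accompanying proof. Your sketch is therefore more detailed than what the paper provides, and it correctly identifies the construction (for each scale $j=1,\dots,\lceil\log_2 n\rceil$ and each of $\Theta(\log n)$ repetitions, sample $A_i$ by including each point independently with probability $2^{-j}$) and the running-time argument (each coordinate is one call to the distance-to-subset oracle). This is exactly the intended content.

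One point of confusion worth correcting: you write that ``non-contraction is deterministic (each coordinate is $1$-Lipschitz, and at the scale matching $\rho(u,v)$ one gets a coordinate contributing the full distance).'' This conflates the two directions. The $1$-Lipschitz property of $u\mapsto\rho(u,A_i)$ gives the \emph{upper} bound $|f(u)_i-f(v)_i|\le\rho(u,v)$ per coordinate, hence $\|f(u)-f(v)\|_1\le t\cdot\rho(u,v)$ deterministically. The \emph{lower} bound---that at each scale a constant fraction of the sampled sets $A_i$ satisfy $|\rho(u,A_i)-\rho(v,A_i)|\ge c\cdot\rho(u,v)$---is the probabilistic part, established by the padded-ball argument in \cite{Matousek-book} and then concentrated via the $\Theta(\log n)$ repetitions. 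After rescaling by $\Theta(1/\log n)$ one obtains the stated form, so it is the non-contraction that holds only w.h.p., while the expansion bound is deterministic. This does not affect the validity of your plan, since you are in any case deferring the distortion analysis to the cited source, but the parenthetical as written is backwards.
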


\begin{lemma}
\label{lem:tree2Bourgain}
Consider a connected graph $G=(V,E)$ on $n$ nodes with $m$ edges and a
weight function $w:E \to [0,\infty)$.  There is a randomized algorithm
that embeds the shortest path metric of $G$
into $\ell_1^{O(\log^2 n)}$ with $O(\log n)$ distortion,
with high probability, in $O(m\log^3n)$ time.
\end{lemma}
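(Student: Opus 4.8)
The plan is to combine Lemma~\ref{lemma:bourgain} with a near-linear-time subroutine for the single subset-distance computation $g(n)$. By Lemma~\ref{lemma:bourgain}, Bourgain's embedding into $\ell_1^{O(\log^2 n)}$ with $O(\log n)$ distortion can be computed with $O(\log^2 n)$ invocations of a procedure that, given a subset $A_i\subseteq V$, computes $\rho(v,A_i)$ for all $v\in V$. So it suffices to show $g(n)=\tO(m)$ for a graph with $m$ edges, and then the total running time is $O(m\log^3 n)$ as claimed (absorbing logarithmic factors from the distance computations into the stated bound).

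First I would reduce the computation of $\rho(\cdot,A_i)$ to a single-source shortest path computation: add an auxiliary super-source vertex $z$ connected by a zero-weight edge to every vertex of $A_i$, and run Dijkstra's algorithm from $z$ on the augmented graph $G'=(V\cup\{z\}, E\cup\{(z,a):a\in A_i\})$. Then $\rho_{G'}(z,v)=\rho_G(v,A_i)$ for every $v\in V$, since any shortest path from $z$ must first step to some vertex of $A_i$ at zero cost and then travel within $G$. The augmented graph still has $O(m+n)=O(m)$ edges (the graph is connected, so $m\ge n-1$), and Dijkstra's algorithm with a binary heap runs in $O((m+n)\log n)=O(m\log n)$ time. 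Since Lemma~\ref{lemma:bourgain} guarantees each $A_i$ is samplable in linear time, one round of the Bourgain construction costs $O(m\log n)$, and the $O(\log^2 n)$ rounds cost $O(m\log^3 n)$ total.

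Putting it together: we instantiate Lemma~\ref{lemma:bourgain} with $g(n)=O(m\log n)$, obtaining an embedding $f:V\to\ell_1^{O(\log^2 n)}$ with $\rho(u,v)\le\|f(u)-f(v)\|_1\le O(\log n)\cdot\rho(u,v)$ w.h.p., computed in $O(g(n)\cdot\log^2 n)=O(m\log^3 n)$ time. This is exactly the statement. The only genuinely substantive point is the super-source reduction for evaluating $\rho(\cdot,A_i)$ in near-linear time; everything else is bookkeeping. One minor caveat to note is the handling of zero-weight edges in Dijkstra (they cause no difficulty — Dijkstra is correct for nonnegative weights — but if one prefers strictly positive weights one can instead run a multi-source Dijkstra by initializing the priority queue with all vertices of $A_i$ at key $0$, which is equivalent and avoids the auxiliary vertex altogether). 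I do not expect any real obstacle here; the lemma is essentially a packaging of standard shortest-path machinery into the interface demanded by Lemma~\ref{lemma:bourgain}.
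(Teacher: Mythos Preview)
Your proposal is correct and takes essentially the same approach as the paper: invoke Lemma~\ref{lemma:bourgain} and implement each subset-distance computation $\rho(\cdot,A_i)$ via Dijkstra in $O(m\log n)$ time, yielding $O(m\log^3 n)$ overall. The paper's proof is even terser---it simply asserts that $\rho(v,A)$ for all $v$ can be computed by Dijkstra in $O(m\log n)$ time---so your super-source (equivalently, multi-source initialization) explanation just fills in a standard detail the paper leaves implicit.
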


\begin{proof}
Let $\psi:V \to \ell_1^{O(\log^2 n)}$ be the embedding given by
Lemma~\ref{lemma:bourgain}. 
For any nonempty subset $A\subseteq V$, we can compute
$\rho(v,A)$ for all $v\in V$ by Dijkstra's algorithm in $O(m\log n)$ time.
The total running time is thus $O(m\log^3 n)$.
\end{proof}

\subsubsection{Finalization of the proof of Theorem~\ref{thm:streamEMD}}
\label{sec:pfStEMD}

We first apply Lemma~\ref{lem:slidingWindow} to embed the sets $A_i$
into $\bigoplus_{\rm min}^{O(\log n)}\ell_1^{k}$ with distortion at
most $O(\log^2 n)$ with high probability, where $k = O(\log^3 n)$. We
write $v_i$, $i\in[n-s+1]$, to denote the embedding of $A_i$. Note
that the TEMD distance between two different $A_i$'s is at least $1/s
\ge 1/n$, and so is the distance between two different $v_i$'s.  We
multiply all coordinates of $v_i$'s by $2kn = \tO(n)$ and round them
to the nearest integer.  This way we obtain vectors $v'_i$ with
integer coordinates in $\{-2knM-1,\ldots,2knM+1\}$.  Consider two
vectors $v_i$ and $v_j$. Let $D$ be their distance, and let $D'$ be
the distance between the corresponding $v'_i$ and $v'_j$. We claim
that $knD \le D' \le 3knD$, and it suffices to show this claim for
$v_i \ne v_j$, in which case we know that $D \ge 1/n$. Each coordinate
of the min-product is $\ell_1^k$, and we know that in each of the
coordinates the distance is at least $D$. Consider a given coordinate
of the min-product, and let $d$ and $d'$ be the distance before and after
the scaling and rounding, respectively. On the one hand,
$$\frac{d'}{d} \ge \frac{2knd - k}{d} \ge 2kn - \frac{k}{D} \ge 2kn - kn = kn,$$
and on the other,
$$\frac{d'}{d} \le \frac{2knd + k}{d} \le 2kn + \frac{k}{D} \le 2kn + kn = 3kn.$$
Therefore, in each coordinate, the distance gets scaled by a factor in the range $[kn,3kn]$.
We now apply Lemma~\ref{lem:l1ToHST} to $v'_i$'s and obtain their embedding into a min-product of tree metrics. Then, we divide all distances in the trees by $kn$, and achieve an embedding of $v_i$'s into a min-product of trees with distortion at most 3 times larger than that implied by Lemma~\ref{lem:l1ToHST}, which is $O(\log n)$.

The resulting min-product of tree metrics need not be a metric, but it
is a $\gamma$-near metric, where $\gamma=O(\log^3 n)$ is the expansion
incurred so far. We therefore embed the min-product of tree metrics
into the shortest-path metric of a weighted graph by using
Lemma~\ref{lem:HST2tree} with expansion at most $\gamma$. Finally, we
embed this metric into a low dimensional $\ell_1$ metric space with
distortion $O(\log^2 n)$ by using Lemma~\ref{lem:tree2Bourgain}.

\section{Applications}

We now present two applications mentioned in the introduction:
sublinear-time approximation of edit distance, and approximate pattern matching
under edit distance.

\subsection{Sublinear-time approximation}\label{appx:sublinear}
We now present a sublinear-time algorithm for distinguishing pairs of
strings with small edit distance from pairs with large edit
distance. Let $x$ and $y$ be the two strings.  The algorithm
partitions them into blocks $\widetilde x_i$ and $\widetilde y_i$ of
the same length such that $x=\widetilde x_1\ldots \widetilde x_b$ and
$y=\widetilde y_1\ldots \widetilde y_b$. Then it selects a few random $i$, and for each of them, it compares $\widetilde x_i$ to $\widetilde y_i$.
If it finds an $i$ for which $\widetilde x_i$ and $\widetilde y_i$
are very different, the distance between $x$ and $y$ is likely to be large.
Otherwise, if no such $i$ is detected, the edit distance between $x$ and $y$ is likely to be small.
Our edit distance algorithm is used for approximating the distance between specific
$\widetilde x_i$ and $\widetilde y_i$.

\begin{theorem}
Let $\alpha$ and $\beta$ be two constants such that $0 \le \alpha < \beta \le 1$.
There is an algorithm that distinguishes pairs of strings with edit distance $O(n^\alpha)$
from those with distance $\Omega(n^\beta)$ in time $n^{\alpha + 2(1-\beta) + o(1)}$.
\end{theorem}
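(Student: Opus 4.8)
The plan is to use the near-linear time edit distance algorithm of Theorem~\ref{thm:main} as a black box on carefully chosen blocks of $x$ and $y$. First I would partition $x$ and $y$ into $b$ consecutive blocks $\widetilde x_1,\ldots,\widetilde x_b$ and $\widetilde y_1,\ldots,\widetilde y_b$, each of length $n/b$, where $b$ is chosen so that a typical block has length roughly $n^{\beta}$; that is, $b = \Theta(n^{1-\beta})$. The key structural observation is that edit distance is subadditive under concatenation in one direction and superadditive in the other: $\ed(x,y)\le\sum_i \ed(\widetilde x_i,\widetilde y_i)$ always, while conversely if $\ed(x,y)$ is large then at least one block pair must contribute a proportional share, so $\max_i \ed(\widetilde x_i,\widetilde y_i) \ge \ed(x,y)/b$ (using that aligning blocks separately is one feasible alignment, and an optimal global alignment restricted to block boundaries only helps). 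Hence if $\ed(x,y)=\Omega(n^{\beta})$ then some block pair has edit distance $\Omega(n^{\beta}/b)=\Omega(n^{2\beta-1})$, which, relative to the block length $n/b=\Theta(n^{\beta})$, is a constant fraction $\Omega(n^{\beta-1})$ of the block --- wait, I need to be careful: we want the per-block distance to be detectable, i.e. to stand out against the approximation factor $2^{\tO(\sqrt{\log n})}$.

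So the sampling step is: pick $O(n^{\alpha+(1-\beta)}\cdot n^{o(1)})$ random indices $i$, and for each run the algorithm of Theorem~\ref{thm:main} on the pair $(\widetilde x_i,\widetilde y_i)$, which costs $(n/b)\cdot 2^{\tO(\sqrt{\log n})} = n^{\beta+o(1)}$ per block. In the ``far'' case ($\ed(x,y)=\Omega(n^{\beta})$), by the superadditivity bound the total edit distance $\sum_i \ed(\widetilde x_i,\widetilde y_i)$ is $\Omega(n^{\beta})$ spread over $b=\Theta(n^{1-\beta})$ blocks; a standard averaging/Markov argument shows that a $\Omega(n^{\beta}/b) = \Omega(n^{2\beta-1})$ fraction --- more precisely, a $\Omega(n^{\beta-1})$ fraction of the blocks have edit distance $\Omega(n^{\beta})$ \emph{within that block} (up to the block length). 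Actually the cleanest version: since each block contributes at most $n/b=\Theta(n^\beta)$, at least $\Omega(n^\beta)/\Theta(n^\beta)=\Omega(1)$ blocks in expectation are "saturated" up to a constant; but to beat the $2^{\tO(\sqrt{\log n})}$ slack we only need $\ed(\widetilde x_i,\widetilde y_i)$ to exceed $2^{\tO(\sqrt{\log n})}\cdot \ed$-threshold-from-the-near-case. In the ``near'' case ($\ed(x,y)=O(n^\alpha)$), subadditivity gives $\sum_i\ed(\widetilde x_i,\widetilde y_i)=O(n^\alpha)$, so by Markov at most a $1/T$ fraction of blocks have edit distance exceeding $T\cdot n^\alpha/b$. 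Setting the detection threshold between these two regimes and using that $2^{\tO(\sqrt{\log n})}=n^{o(1)}$, a sample of size $\tilde O(b/(\text{number of heavy blocks})) = n^{1-\beta+o(1)}$ suffices to hit a heavy block w.h.p. in the far case, while no sampled block looks heavy in the near case. The total running time is (sample size)$\times$(cost per block) $= n^{1-\beta+o(1)}\cdot n^{\beta+o(1)}$, which is $n^{1+o(1)}$ --- too slow; I need to shrink the blocks further.

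The fix, which I expect to be the main subtlety, is to choose the block length $\ell$ smaller than $n^\beta$: take $\ell = n^{\alpha+(1-\beta)+o(1)}$ so that $b = n^{1-\alpha-(1-\beta)-o(1)}$. Then in the far case a heavy block has edit distance $\Omega(n^\beta/b) = \Omega(n^{\alpha+o(1)}\cdot n^{\text{something}})$; one checks that relative to block length $\ell$ this is a $2^{\tO(\sqrt{\log n})}$-distinguishable gap from the near case, where per-block distance is $O(n^\alpha/b \cdot b/(\text{fraction})) = O(n^\alpha)$-ish within a block of length $\ell \gg n^\alpha$. The number of heavy blocks in the far case is now $\Omega(n^\beta/\ell) = \Omega(n^{\beta - \alpha - (1-\beta)})$, so the required sample size to hit one is $\tilde O(b /(\text{heavy count})) = \tilde O(\ell / n^\alpha \cdot \text{stuff})$; multiplying by the per-block cost $\ell\cdot 2^{\tO(\sqrt{\log n})}$ and optimizing $\ell$ yields total time $n^{\alpha + 2(1-\beta)+o(1)}$. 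The hard part is bookkeeping the three-way interaction between (i) the block length $\ell$, (ii) the number of heavy blocks guaranteed in the far case versus forbidden in the near case, and (iii) the $2^{\tO(\sqrt{\log n})}$ approximation slack of the subroutine, and verifying that a single threshold separates the two cases with the claimed sample size; I would also need the elementary but slightly fussy superadditivity lemma $\sum_i \ed(\widetilde x_i, \widetilde y_i) \ge \ed(x,y)$ combined with the upper bound $\ed(\widetilde x_i,\widetilde y_i)\le \ell$ per block, plus a Chernoff bound over the random sample, and a final amplification by repetition to drive the failure probability down.
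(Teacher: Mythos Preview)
Your overall strategy---partition into $b$ blocks, sample $O(n^{1-\beta})$ indices, run the near-linear approximation on each sampled block, and set the threshold so that the $2^{\tO(\sqrt{\log n})}$ slack is absorbed---is exactly the paper's approach, and your final choice of block length $\ell = n/b = n^{\alpha+(1-\beta)+o(1)}$ (equivalently $b = n^{\beta-\alpha}/n^{o(1)}$) matches the paper's choice $b = n^{\beta-\alpha}/(f(n)\log n)$. The running-time arithmetic $n^{1-\beta+o(1)}\cdot \ell^{1+o(1)} = n^{\alpha+2(1-\beta)+o(1)}$ is correct.

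There is, however, a real error in your handling of the near case. You write that ``subadditivity gives $\sum_i \ed(\widetilde x_i,\widetilde y_i)=O(n^\alpha)$'' and then invoke Markov on the sum. That does not follow: the only inequality you have (and the one you state correctly at the start, and again at the end) is $\ed(x,y)\le \sum_i \ed(\widetilde x_i,\widetilde y_i)$, which is a \emph{lower} bound on the sum; it gives you nothing when $\ed(x,y)$ is small. In fact the sum can be as large as $\Theta(b\cdot n^\alpha)$, so the Markov step as written is vacuous. The paper instead uses the pointwise bound $\ed(\widetilde x_i,\widetilde y_i)\le \ed(x,y)$ for \emph{every} $i$ (an optimal alignment of $x$ with $y$ restricted to the $i$th block yields, after shifting the endpoints by at most $\ed(x,y)$ positions, an alignment of $\widetilde x_i$ with $\widetilde y_i$). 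This is the fact you need: in the near case every sampled block has distance $O(n^\alpha)$, while in the far case an $\Omega(n^{\beta-1})$ fraction of blocks have distance $\Omega(n^\alpha f(n)\log n)$, so a single threshold of order $n^\alpha f(n)$ separates the two cases. Once you swap in this per-block bound, your argument goes through and coincides with the paper's.
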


\begin{proof}
Let $f(n) = 2^{O(\sqrt{\log n\log\log n})}$ be a non-decreasing function that
bounds the approximation factor of the algorithm given by
Theorem~\ref{thm:main}. Let $b = \frac{n^{\beta - \alpha}}{f(n) \cdot
  \log n}$. We partition the input strings $x$ and $y$ into $b$ blocks,
denoted $\widetilde x_i$ and $\widetilde y_i$ for $i\in[b]$, of length $n/b$ each. 

If $\ed(x,y) = O(n^\alpha)$, then $\max_i \ed(\widetilde x_i,\widetilde y_i) \le \ed(x,y) =
O(n^\alpha)$.  On the other hand, if $\ed(x,y) = \Omega(n^\beta)$,
then $\max_i  \ed(\widetilde x_i,\widetilde y_i) \ge \ed(x,y) / b =
\Omega(n^\alpha \cdot f(n) \cdot \log n)$.  Moreover, the number of
blocks $i$ such that $\ed(\widetilde x_i,\widetilde y_i) \ge \ed(x,y) / 2b = \Omega(n^\alpha
\cdot f(n) \cdot \log n)$ is at least $$ \frac{\ed(x,y) - b \cdot
  \ed(x,y) / 2b}{n/b} = \Omega(n^{\beta -1} \cdot b).$$ Therefore, we
can tell the two cases apart with constant probability by sampling
$O(n^{1-\beta})$ pairs of blocks $(\widetilde x_i, \widetilde y_i)$ and checking if any of
the pairs is at distance $\Omega(n^\alpha \cdot f(n) \cdot \log n)$.
Since for each such pair of strings, we only have to tell edit
distance $O(n^\alpha)$ from $\Omega(n^\alpha \cdot f(n) \cdot \log
n)$, we can use the algorithm of Theorem~\ref{thm:main}. We amplify
the probability of success of that algorithm in the standard way by
running it $O(\log n)$ times. The total running time of the algorithm
is $O(n^{1-\beta}) \cdot O(\log n) \cdot (n/b)^{1+o(1)} = O(n^{\alpha
  + 2(1-\beta) + o(1)})$.
\end{proof}

\subsection{Pattern matching}
Our algorithm can be used for approximating the edit distance between
a pattern $P$ of length $n$ and all length-$n$ substrings of a text
$T$. Let $N = |T|$. For every $s\in[N-2n+1]$ of the form $in+1$, we
concatenate $T$'s length-$2n$ substring that starts at index $s$ with
$P$, and compute an embedding of edit distance between all length-$n$
substrings of the newly created string into $\ell_1^{\alpha}$ for
$\alpha=2^{O(\sqrt{\log n\log \log n})}$. We routinely amplify
the probability of success of each execution of the algorithm by
running it $O(\log N)$ times and selecting the median of the
returned values.  The running time of the algorithm is $O(N \log N)
\cdot 2^{O(\sqrt{\log n\log\log n})}$.

The distance between each of the substrings and the pattern is approximate up
to a factor of $2^{O(\sqrt{\log n \log\log n})}$, and can be used both for
finding approximate occurrences of $P$ in $T$, and for finding a substring of $T$
that is approximately closest to $P$.

\section*{Acknowledgment}

The authors thank Piotr Indyk for helpful discussions, and Robert
Krauthgamer, Sofya Raskhodnikova, Ronitt Rubinfeld, and Rahul Sami for
early discussions on near-linear algorithms for edit distance.

{\small
\bibliographystyle{alpha}
\bibliography{bibfile}
}

\end{document}